\def\RP{{\mathbb{R}_{\geq 0}}}
\def\RPP{{\mathbb{R}_{>0}}}
\def\R{{\mathbb{R}}}
\def\N{[n]}
\def\S{\beta\mbox{-}{\sf SUM}}
\def\M{\beta\mbox{-}{\sf MAX}}
\newtheorem{theorem}{Theorem}
\newtheorem{lemma}{Lemma}
\newtheorem{definition}{Definition}
\newtheorem{property}{Property}
\def \qed{{$\hfill \Box$}}
\newenvironment{proof}{{\bf Proof}: }
\begin{document}

\title{On the Robustness of the Approximate Price of Anarchy\\in Generalized Congestion Games\\(Full Version)\thanks{This work was partially supported by the PRIN 2010--2011 research project ARS TechnoMedia: ``Algorithmics for Social Technological Networks'' funded by the Italian Ministry of University.}}
\author{
Vittorio Bil\`o \thanks{{Department of Mathematics and Physics ``Ennio De Giorgi", University of Salento,
Provinciale Lecce-Arnesano, P.O. Box 193, 73100 Lecce - Italy,
Email: \textsf{vittorio.bilo@unisalento.it}.}}}

\maketitle

\begin{abstract}
One of the main results shown through Roughgarden's notions of smooth games and robust price of anarchy is that, for any sum-bounded utilitarian social function, the worst-case price of anarchy of coarse correlated equilibria coincides with that of pure Nash equilibria in the class of weighted congestion games with non-negative and non-decreasing latency functions and that such a value can always be derived through the, so called, smoothness argument. We significantly extend this result by proving that, for a variety of (even non-sum-bounded) utilitarian and egalitarian social functions and for a broad generalization of the class of weighted congestion games with non-negative (and possibly decreasing) latency functions, the worst-case price of anarchy of $\epsilon$-approximate coarse correlated equilibria still coincides with that of $\epsilon$-approximate pure Nash equilibria, for any $\epsilon\geq 0$. As a byproduct of our proof, it also follows that such a value can always be determined by making use of the primal-dual method we introduced in a previous work. It is important to note that our scenario of investigation is beyond the scope of application of the robust price of anarchy (for as it is currently defined), so that our result seems unlikely to be alternatively proved via the smoothness framework.
\end{abstract}

\section{Introduction}
The celebrated notion of robust price of anarchy introduced by Roughgarden in \cite{R09,R12} has lately arouse much interest in the determination of inefficiency bounds for pure Nash equilibria which may automatically extend to some of their appealing generalizations, such as mixed Nash equilibria, correlated equilibria and coarse correlated equilibria. These three types of solutions have a particular flavor since, differently from pure Nash equilibria, they are always guaranteed to exist by Nash's Theorem \cite{N50}\footnote{To this aim, we recall that the set of coarse correlated equilibria contains that of correlated equilibria, which contains that of mixed Nash equilibria, which contains that of pure Nash equilibria.}; moreover, the last two ones can also be efficiently computed and even easily learned when a game is repeatedly played over time.

To this aim, Roughgarden \cite{R09,R12} identifies a class of games, called {\em smooth games}, for which a simple three-line proof, called {\em smoothness argument}, shows significant upper bounds on the price of anarchy of pure Nash equilibria as long as the social function measuring the quality of any strategy profile in the game is {\bf\em sum-bounded}, that is, upper bounded by the sum of the players' costs\footnote{Throughout the paper, we implicitly assume that all games under consideration are cost minimization ones. All the claimed properties and results can be applied {\em mutatis mutandis} to the case of payoff maximization games.}. He then defines the {\em robust price of anarchy} of a smooth game as the best-possible (i.e., the lowest) upper bound which can be derived by making use of this argument and provides an {\em extension theorem} which shows that, still for sum-bounded social functions,
the price of anarchy of coarse correlated equilibria of any smooth game is upper bounded by its robust price of anarchy.
Finally, he shows that several games considered in the literature happen to be smooth and that the class of (unweighted) congestion games with non-negative and {\bf\em non-decreasing latency functions} is {\em tight} for the {\bf\em utilitarian social function} (that is, the social function defined as the sum of the players' costs), in the sense that, in this class of games, the worst-case price of anarchy of pure Nash equilibria exactly matches the robust price of anarchy. This last result has been subsequently extended to the class of weighted congestion games by Bhawalkar, Gairing and Roughgarden in \cite{BGR10}.

\subsection{Our Contribution and Significance}
In this work, we generalize the tightness result by Bhawalkar, Gairing and Roughgarden along the following four directions (see Section \ref{sec-def} for formal definitions):
\begin{enumerate}
\item the class of games we consider is a broad generalization of that of weighted congestion games. In particular, we focus on {\em generalized weighted congestion games}, that is, games in which each player's {\em perceived cost} is defined as a certain linear combination of all the players' {\em individual costs} originally experienced in some underlying weighted congestion game. Thus, it is quite easy to figure out that the class of generalized weighted congestion games widely extends that of weighted congestion games;
\item the families of social functions we consider are generalizations of both the utilitarian and the egalitarian social functions (where the egalitarian social function is defined as the maximum of the players' costs). In particular, a family of utilitarian social functions is obtained by summing up a certain contribution from each player, whereas a family of egalitarian social functions is obtained by taking the maximum contribution among the players, where each player's contribution is given by a conic combination of the players' individual costs. We stress that such a combination may significantly differ from the one used to define the players' perceived costs, so that there exist social functions in both families that may not be sum-bounded;
\item the latency functions we consider in the definition of the players' individual costs are selected from a family of allowable non-negative functions with no additional restrictions. This permits us to encompass also latency functions not considered so far in the previous tightness results known in the literature, such as, for instance, the widely used fair cost sharing rule induced by the Shapley value \cite{S53};
\item the solution concepts we consider are the approximate versions of all the four types of equilibria named so far. In particular, for any real value $\epsilon\geq 0$, we focus on either $\epsilon$-approximate pure Nash equilibria and $\epsilon$-approximate coarse correlated equilibria. For the special case of $\epsilon=0$, one reobtains the notions of pure Nash equilibria and coarse correlated equilibria, so that results for these solution concepts can be obtained as a special case of the ones holding for their approximate versions.
\end{enumerate}
More precisely, but still informally speaking, we prove the following result (Theorem \ref{mainth} in Section~\ref{sec-main}):
\begin{quote}
{\em for a variety of utilitarian and egalitarian social functions and for any real value $\epsilon\geq 0$, the worst-case price of anarchy of $\epsilon$-approximate pure Nash equilibria coincides with that of $\epsilon$-approximate coarse correlated equilibria in the class generalized weighted congestion games with non-negative latency functions}.
\end{quote}
As it can be appreciated, the above tightness result generalizes the previous one by Bhawalkar, Gairing and Roughgarden along all four directions simultaneously. The technique we use to prove the theorem is the primal-dual method that we introduced in \cite{B12}. In fact, as a byproduct of our proof, it also follows that, in the above considered scenario of investigation,
\begin{quote}
{\em the worst-case price of anarchy of $\epsilon$-approximate pure Nash equilibria can always be determined through the primal-dual method}.
\end{quote}
We would like to stress that, when adopting the social functions described at point $2$, generalized weighted congestion games are not smooth games in general, so that the above tightness result seems unlikely to be reproved via smoothness arguments, at least in the way in which they have been defined and used so far in the literature. This seems to provide an evidence that the primal-dual method may be more powerful than the smoothness framework as far as we focus on congestion games and some of their possible generalizations.

\subsection{Related Work}
The notion of price of anarchy as a measure of the inefficiency caused by selfish behavior in non-cooperative games has been introduced in a seminal paper by Koutsoupias and Papadimitriou \cite{KP99} in 1999. Since then, several classes of games have been studied under this perspective. Among these classes, congestion games introduced by Rosenthal in \cite{R73} and their weighted variants \cite{MS96} occupy a preeminent role.

Awerbuch, Azar and Epstein \cite{AAE05} and Christodoulou and Koutsoupias \cite{CK05} focus on the worst-case price of anarchy of pure Nash equilibria in either weighted and unweighted congestion games under the utilitarian social function. They independently give tight bounds for the case of affine latency functions and almost tight upper and lower bounds for the case of polynomial latency functions with non-negative coefficients. Such a gap has been subsequently closed by Aland et al. in \cite{ADGMS11}. Moreover, Christodoulou, Koutsoupias and Spirakis \cite{CKS11} obtain tight bounds on the worst-case price of anarchy of approximate pure Nash equilibria in unweighted congestion games for the case of polynomial latency functions with non-negative coefficients, while Christodoulou and Koutsoupias \cite{CK05b} show that the worst-case price of anarchy of correlated equilibria is the same as that for pure Nash equilibria in weighted and unweighted congestion games when considering affine latency functions. As already said, such an equivalence has been further extended to coarse correlated equilibria and to any class of non-negative and non-decreasing latency functions by Roughgarden \cite{R09,R12} in the unweighted case and by Bhawalkar, Gairing and Roughgarden \cite{BGR10} in the weighted case, by making use of the smoothness argument and the robust price of anarchy.

Robust bounds on the worst-case price of anarchy have been lately achieved via extensions of the smoothness argument in some generalizations of (unweighted) congestion games. In particular, de Keijzer et al. \cite{DSAB13} and Rahn and Sch\"afer \cite{RS13} consider the altruistic extension of congestion games in which, similarly to our model of generalized congestion games, the perceived cost of each player is defined as a linear combination of the individual costs of all the players in the game. Anyway, while we do not impose any kind of restriction on such a combination, they consider the case in which the multiplicative coefficients lie in the interval $[0,1]$ and, for each player $i$, the contribution of the individual cost of player $i$ to her perceived cost has to be always multiplied by $1$. Moreover, they restrict their analysis to the case in which the social function is the sum of the players' individual costs.

Much less attention has been devoted in the literature to the egalitarian social function, for which Christodoulou and Koutsoupias \cite{CK05} give an asymptotically tight bound on the worst-case price of anarchy in unweighted congestion games with affine latency functions.

We introduced the primal-dual method in \cite{B12} as a tool for obtaining tight bounds on the inefficiencies caused by selfish behavior in weighted congestion games and their possible generalizations for a variety of solutions concepts. In particular the primal-dual method has been applied by Bil\`o, Flammini and Gallotti \cite{BFG12} to derive tight bounds on the worst-case price of anarchy of pure Nash equilibria in congestion games with affine latency functions under the assumption that the players' knowledge is restricted by the presence of an underlying social knowledge graph; by Bil\`o \cite{B14} to derive tight bounds on the worst-case price of stability of pure Nash equilibria in congestion games with affine latency functions and altruistic players; by Bil\`o and Paladini \cite{BP14} to derive tight bounds on the approximation ratio of the solutions achieved after a one-round walk of $\epsilon$-approximate best-responses starting from any initial strategy profile in cut games, for any $\epsilon\geq 0$; by Bil\`o et al. \cite{BFMM14} to derive a surprising matching lower bound on the price of anarchy of subgame perfect equilibria in sequential cut games; and by Bil\`o, Fanelli and Moscardelli \cite{BFM13} to derive significant upper bounds on the price of anarchy of lookahead equilibria in congestion games with affine latency functions.

\subsection{Paper Organization}
The paper is organized as follows. In the next section, we give all necessary definitions and notation and provide also some preliminary remarks. Section \ref{sec-main} contains the technical contribution of the paper, with the proof of our main theorem. In the last section, we conclude and discuss open problems.

\section{Definitions, Notation and Preliminaries}\label{sec-def}
A {\em weighted congestion game} is a tuple ${\sf CG}=\left(\N,(w_i)_{i\in\N},E,({\sf\Sigma}_i)_{i\in\N},(\ell_e)_{e\in E}\right)$ such that $\N=\{1,2,\ldots,n\}$ is a set of $n\geq 2$ players, $w_i>0$ is the {\em weight} of player $i$, $E$ is a non-empty set of {\em resources}, ${\sf\Sigma}_i\subseteq 2^E\setminus\{\emptyset\}$ is a non-empty {\em set of strategies} for player $i$ and $\ell_e:\RP\rightarrow\RP$ is the {\em latency function} of resource $e\in E$. Denote as ${\sf\Sigma}=\prod_{i\in\N}{\sf\Sigma}_i$ the set of all strategy profiles of $\sf CG$, that is, the set of outcomes which can be realized when each player $i\in\N$ chooses a strategy in ${\sf\Sigma}_i$. A {\em strategy profile} ${\bm\sigma}=(\sigma_1,\ldots,\sigma_n)$ is then a vector of strategies, where, for each $i\in\N$, $\sigma_i\in{\sf\Sigma}_i$ denotes the choice of player $i$ in $\bm\sigma$. For a strategy profile $\bm\sigma$ and a resource $e\in E$, the value $n_e({\bm\sigma})=\sum_{i\in\N:e\in\sigma_i}w_i$ denotes the {\em congestion} of resource $e$ in $\bm\sigma$, that is, the sum of the weights of all the players choosing $e$ in $\bm\sigma$. The {\em individual cost} of player $i$ in $\bm\sigma$ is defined as $c_i({\bm\sigma})=w_i\sum_{e\in\sigma_i}\ell_e(n_e({\bm\sigma}))$.

Given a finite space of functions ${\cal F}\subseteq\{f:\RP\rightarrow\RP\}$, let ${\cal B}({\cal F})=\{f_k:\RP\rightarrow\RP\ |\ k\in [r]\}$ be a basis for $\cal F$ of cardinality $r$, whose elements (functions) are numbered from $1$ to $r$. We say that $\sf CG$ is defined over $\cal F$ if, for each $e\in E$, it holds that $\ell_e=\sum_{k\in [r]}v_k^e f_k$, where $v_k^e\in\R$ is a scalar. Throughout the paper, we will impose only {\bf\em minimal assumptions} on $\cal F$; in particular, we will assume that any $f\in\cal F$ is non-negative with $f(x)=0$ if and only if $x=0$.

For any $n$-dimensional vector of (positive) weights ${\bm w}=(w_1,\ldots,w_n)$, we denote with ${\cal C}_{\bm w}({\cal F})$ the class of all the weighted congestion games with players' weights induced by $\bm w$ and defined over $\cal F$. Moreover, for a fixed quadruple ${\sf T}_{\bm w}=(\N,{\bm w},E,({\sf\Sigma}_i)_{i\in\N})$, called a {\em congestion model}, the set ${\cal C}_{{\sf T}_{\bm w}}({\cal F})=\{{\sf CG}\in{\cal C}_{\bm w}({\cal F})\ |\ {\sf CG}=({\sf T}_{\bm w},(\ell_e)_{e\in E})\}$ is the set of all the weighted congestion games induced by ${\sf T}_{\bm w}$ and defined over $\cal F$. Note that, since for each game ${\sf CG}\in{\cal C}_{{\sf T}_{\bm w}}({\cal F})$ and $e\in E$ there exist $r$ numbers $v_1^e,\ldots,v_r^e$ such that $\ell_e=\sum_{k\in [r]}v_k^e f_k$, it follows that $\sf CG$ can be specified by the pair $({\sf T}_{\bm w},(v_k^e)_{e\in E,k\in [r]})$. Moreover, it holds that ${\cal C}_{\bm w}({\cal F})=\bigcup_{{\sf T}_{\bm w}}{\cal C}_{{\sf T}_{\bm w}}({\cal F})$. Finally, we denote with ${\sf\Sigma}({\sf T}_{\bm w})$ the set of strategy profiles induced by the congestion model ${\sf T}_{\bm w}$.

A {\em generalized weighted congestion game} is a pair $({\sf CG},\alpha)$ where ${\sf CG}=(\N,(w_i)_{i\in\N},E,({\sf\Sigma}_i)_{i\in\N},(\ell_e)_{e\in E})$ is a weighted congestion game and $\alpha\in\R^{n\times n}$ is an $n$-dimensional square matrix. Game $({\sf CG},\alpha)$ has the same set of players and strategies of $\sf CG$, but the {\em perceived cost} of player $i$ in the strategy profile $\bm\sigma$ is defined as $$\widehat{c}_i({\bm\sigma})=\sum_{j\in\N}\alpha_{ij}c_j({\bm\sigma})=\sum_{j\in\N}\alpha_{ij}w_j\sum_{e\in E:e\in\sigma_j}\ell_e(n_e({\bm\sigma}))=\sum_{e\in E}\sum_{k\in [r]}v^e_k f_k(n_e({\bm\sigma}))\sum_{j\in [n]:e\in\sigma_j}\alpha_{ij}w_j,$$ where $c_i({\bm\sigma})$ is the individual cost that player $i$ experiences in $\bm\sigma$ in the underlying weighted congestion game $\sf CG$.
Note that, when $\alpha$ is the identity matrix, $({\sf CG},\alpha)$ coincides with $\sf CG$, while, in all the other cases, $({\sf CG},\alpha)$ may not be isomorphic to any weighted congestion game, so that the set of generalized weighted congestion games expands that of weighted congestion games.

Given a strategy profile $\bm\sigma$, a player $i\in\N$ and a strategy $x\in{\sf\Sigma}_i$, we denote with $({\bm\sigma}_{-i},x)$ the strategy profile obtained from $\bm\sigma$ when player $i$ changes her strategy from $\sigma_i$ to $x$, while the strategies of all the other players are kept fixed. In particular, for any $\epsilon\geq 0$, the perceived cost suffered by player $i$ in $\bm\sigma$ minus $1+\epsilon$ times the perceived cost suffered by player $i$ in $({\bm\sigma}_{-i},x)$ in a generalized weighted congestion game can be expressed as follows:
\begin{eqnarray*}
\widehat{c}_i({\bm\sigma})-(1+\epsilon)\cdot\widehat{c}_i({\bm\sigma}_{-i},x) & = & \displaystyle\sum_{j\in\N}\alpha_{ij}c_j({\bm\sigma})-(1+\epsilon)\sum_{j\in\N}\alpha_{ij}c_j({\bm\sigma}_{-i},x)\\
& = & \alpha_{ii}w_i\left(\displaystyle\sum_{e\in\sigma_i}\ell_e(n_e({\bm\sigma}))-(1+\epsilon)\sum_{e\in x}\ell_e(n_e({\bm\sigma}_{-i},x))\right)\\
& & +\displaystyle\sum_{j\in\N:j\neq i}\alpha_{ij}w_j\left(\displaystyle\sum_{e\in\sigma_i}\ell_e(n_e({\bm\sigma}))-(1+\epsilon)\sum_{e\in x}\ell_e(n_e({\bm\sigma}_{-i},x))\right)\\
& = & \alpha_{ii}w_i\left(\displaystyle\sum_{e\in\sigma_i\setminus x}\ell_e(n_e({\bm\sigma}))-(1+\epsilon)\sum_{e\in x\setminus\sigma_i}\ell_e(n_e({\bm\sigma})+w_i)\right)\\
& & +\displaystyle\sum_{j\in\N:j\neq i}\alpha_{ij}w_j\left(\displaystyle\sum_{e\in\sigma_i\setminus x}\ell_e(n_e({\bm\sigma}))-(1+\epsilon)\sum_{e\in x\setminus\sigma_i}\ell_e(n_e({\bm\sigma})+w_i)\right).
\end{eqnarray*}

Hence, we get
\begin{multline}\label{deviation}
\widehat{c}_i({\bm\sigma})-(1+\epsilon)\cdot\widehat{c}_i({\bm\sigma}_{-i},x)=\\
\displaystyle\sum_{e\in\sigma_i\setminus x}\ell_e(n_e({\bm\sigma}))\sum_{j\in\N:e\in\sigma_j}\alpha_{ij}w_j-(1+\epsilon)\displaystyle\sum_{e\in x\setminus\sigma_i}
\ell_e(n_e({\bm\sigma})+w_i)\left(\alpha_{ii}w_i+\sum_{j\in\N:e\in\sigma_j}\alpha_{ij}w_j\right).
\end{multline}

Next two definitions formalize the two concepts of approximate equilibria that we will consider throughout the paper.

\begin{definition}
For any $\epsilon\geq 0$, an {\bf $\epsilon$-approximate coarse correlated equilibrium} is a probability distribution $\bm p$ defined over $\sf\Sigma$ such that, for any player $i\in\N$ and strategy $x\in{\sf\Sigma}_i$, it holds that $$\sum_{{\bm \sigma}\in{\sf\Sigma}}p_{\bm \sigma}\cdot \widehat{c}_i({\bm \sigma})\leq (1+\epsilon)\sum_{{\bm \sigma}\in{\sf\Sigma}}p_{\bm \sigma}\cdot \widehat{c}_i({\bm \sigma}_{-i},x),$$
where, for each ${\bm\sigma}\in{\sf\Sigma}$, $p_{\bm\sigma}$ is the probability assigned to $\bm\sigma$ by $\bm p$.
\end{definition}

\begin{definition}
For any $\epsilon\geq 0$, an {\bf $\epsilon$-approximate pure Nash equilibrium} is a strategy profile $\bm\sigma$ such that, for any player $i\in\N$ and strategy $x\in{\sf\Sigma}_i$, it holds that $\widehat{c}_i({\bm\sigma})\leq (1+\epsilon)\cdot\widehat{c}_i({\bm\sigma}_{-i},x).$
\end{definition}

Denote as ${\sf PNE}_\epsilon({\sf CG},\alpha)$ and ${\sf CCE}_\epsilon({\sf CG},\alpha)$, respectively, the set of $\epsilon$-approximate pure Nash equilibria and $\epsilon$-approximate coarse correlated equilibria of the generalized weighted congestion game $({\sf CG},\alpha)$.
It is easy to see that, for any $\epsilon\geq 0$, an $\epsilon$-approximate pure Nash equilibrium $\bm\sigma$ is an $\epsilon$-approximate coarse correlated equilibrium $\bm p$ such that $p_{\bm\sigma}=1$ and $p_{\bm\tau}=0$ for any ${\bm\tau}\in{\sf\Sigma}\setminus\{{\bm\sigma}\}$. So, ${\sf PNE}_\epsilon({\sf CG},\alpha)\subseteq{\sf CCE}_\epsilon({\sf CG},\alpha)$. Moreover, the sets ${\sf PNE}_0({\sf CG},\alpha)$ and ${\sf CCE}_0({\sf CG},\alpha)$ coincide with the sets of pure Nash equilibria and coarse correlated equilibria of $({\sf CG},\alpha)$, respectively.

For an $n$-dimensional {\em non-null} square matrix $\beta\in\R_{\geq 0}^{n\times n}$ and a player $i\in\N$, let $\beta\mbox{-}cost_i:{\sf\Sigma}\rightarrow\RPP$ be the contribution of player $i$ to the definition of the social function which is defined as follows:
\begin{equation*}
\beta\mbox{-}cost_i({\bm\sigma}) =  \sum_{j\in\N}\beta_{ij}c_j({\bm\sigma})=\sum_{e\in E}\sum_{k\in [r]}v^e_k f_k(n_e({\bm\sigma}))\sum_{j\in\N:e\in\sigma_j}\beta_{ij}w_j.
\end{equation*}

Let $\Delta({\sf\Sigma})$ be the set of all the probability distributions defined over $\sf\Sigma$. For a $\bm p\in\Delta({\sf\Sigma})$, the {\em $\beta$-utilitarian social function} is a function $\S:\Delta({\sf\Sigma})\rightarrow\RPP$ such that
\begin{eqnarray*}
\S({\bm p}) & = & \sum_{i\in\N}\mathbb{E}_{{\bm\sigma}\sim{\bm p}}\left[\beta\mbox{-}cost_i({\bm\sigma})\right]\\
& = & \mathbb{E}_{{\bm\sigma}\sim{\bm p}}\left[\sum_{i\in\N}\beta\mbox{-}cost_i({\bm\sigma})\right]\\
& = & \sum_{{\bm\sigma}\in{\sf\Sigma}}p_{{\bm\sigma}}\left(\sum_{e\in E}\sum_{k\in [r]}v^e_k f_k(n_e({\bm\sigma}))\sum_{i\in\N}\sum_{j\in\N:e\in\sigma_j}\beta_{ij}w_j\right)
\end{eqnarray*}
and the {\em $\beta$-egalitarian social function} is a function $\M:\Delta({\sf\Sigma})\rightarrow\RPP$ such that
\begin{eqnarray*}
\M({\bm p}) & = & \max_{i\in\N}\left\{\mathbb{E}_{{\bm\sigma}\sim{\bm p}}\left[\beta\mbox{-}cost_i({\bm\sigma})\right]\right\}\\
& = & \max_{i\in\N}\left\{\sum_{{\bm\sigma}\in{\sf\Sigma}}p_{\bm\sigma}\sum_{e\in E}\sum_{k\in [r]}v^e_k f_k(n_e({\bm\sigma}))\sum_{j\in\N:e\in\sigma_j}\beta_{ij}w_j\right\}.\footnote{There is also another possible (and indeed more traditional) definition for the $\beta$-egalitarian social function, obtained by setting $\M({\bm p}) = \mathbb{E}_{{\bm\sigma}\sim{\bm p}}\left[\max_{i\in\N}\{\beta\mbox{-}cost_i({\bm\sigma})\}\right]$. In such a case, however, the application of the primal-dual method seems to be not so natural, so that the study of this social function remains an interesting open problem at the moment.}
\end{eqnarray*}

Consider the case in which ${\bm p}\in\Delta({\sf\Sigma})$ is indeed a strategy profile ${\bm\sigma}\in{\sf\Sigma}$. When $\beta$ is the identity matrix, $\S$ (resp. $\M$) coincides with the sum (resp. the maximum) of the players' individual costs in the underlying weighted congestion game $\sf CG$, while, when $\beta=\alpha$, $\S$ (resp. $\M$) coincides with the sum (resp. the maximum) of the players' perceived costs in $({\sf CG},\alpha)$. In general, an infinite variety of social functions can be defined by tuning the choice of matrix $\beta$\footnote{One could even relax the constraint $\beta\in\R_{\geq 0}^{n\times n}$ and allow for negative entries in matrix $\beta$ as long as $\sum_{i\in\N}\beta_{ij}\geq 0$ for each $j\in\N$ and $\sum_{i\in\N}\beta_{ij}>0$ for some $j\in\N$ which still guarantees either $\S({\bm\sigma})>0$ and $\M({\bm\sigma})>0$ for each ${\bm\sigma}\in\sf\Sigma$.}. For a function ${\sf SF}\in\{{\sf SUM},{\sf MAX}\}$, we denote with $\bm o$ the {\em social optimum}, that is, any strategy profile minimizing $\beta\mbox{-}{\sf SF}$. Note that, by the properties of the latency functions and the definition of $\beta$\footnote{From now on, we will always assume that $\beta$ is a non-null matrix.}, it follows that $\beta\mbox{-}{\sf SF}({\bm o})>0$.
The $\epsilon$-approximate coarse correlated price of anarchy of $({\sf CG},\alpha)$ under the social function $\beta\mbox{-}\sf SF$ is defined as $${\sf CCPoA}_\epsilon(\beta\mbox{-}{\sf SF},{\sf CG},\alpha)=\max_{{\bm p}\in{\sf CCE}_\epsilon({\sf CG},\alpha)}\frac{\beta\mbox{-}{\sf SF}({\bm p})}{\beta\mbox{-}{\sf SF}({\bm o})},$$
while the $\epsilon$-approximate pure price of anarchy of $({\sf CG},\alpha)$ under the social function $\beta\mbox{-}{\sf SF}$ is defined as
$${\sf PPoA}_\epsilon(\beta\mbox{-}{\sf SF},{\sf CG},\alpha)=\max_{{\bm\sigma}\in{\sf PNE}_\epsilon({\sf CG},\alpha)}\frac{\beta\mbox{-}{\sf SF}({\bm\sigma})}{\beta\mbox{-}{\sf SF}({\bm o})}.$$

For an $n$-dimensional vector of weights ${\bm w}=(w_1,\ldots,w_n)$ and a matrix $\alpha\in\R^{n\times n}$,
we denote with ${\cal C}_{\bm w}({\cal F},\alpha)=\{({\sf CG},\alpha):{\sf CG}\in{\cal C}_{\bm w}({\cal F})\}$ the set of all the generalized weighted congestion games induced by $\bm w$ and $\alpha$ and defined over $\cal F$. Similarly, for any congestion model ${\sf T}_{\bm w}$, one defines the class ${\cal C}_{{\sf T}_{\bm w}}({\cal F},\alpha)$, so as to obtain ${\cal C}_{\bm w}({\cal F},\alpha)=\bigcup_{{\sf T}_{\bm w}}{\cal C}_{{\sf T}_{\bm w}}({\cal F},\alpha)$. The worst-case $\epsilon$-approximate coarse correlated price of anarchy of the class ${\cal C}_{\bm w}({\cal F},\alpha)$ under the social function $\beta\mbox{-}{\sf SF}$ is defined as $${\sf CCPoA}_\epsilon(\beta\mbox{-}{\sf SF},{\cal C}_{\bm w}({\cal F},\alpha))=\sup_{({\sf CG},\alpha)\in{\cal C}_{\bm w}({\cal F},\alpha)}{\sf CCPoA}_\epsilon(\beta\mbox{-}{\sf SF},{\sf CG},\alpha).$$ Similarly, one defines the worst-case $\epsilon$-approximate pure price of anarchy of the class ${\cal C}_{\bm w}({\cal F},\alpha)$ under the social function $\beta\mbox{-}{\sf SF}$.

By ${\sf PNE}_\epsilon({\sf CG},\alpha)\subseteq{\sf CCE}_\epsilon({\sf CG},\alpha)$, it follows that ${\sf PPoA}_\epsilon(\beta\mbox{-}{\sf SF},{\cal C}_{\bm w}({\cal F},\alpha))\leq{\sf CCPoA}_\epsilon(\beta\mbox{-}{\sf SF},{\cal C}_{\bm w}({\cal F},\alpha))$ for any real value $\epsilon\geq 0$, $n$-dimensional vector of weights $\bm w$, finite space of function $\cal F$, pair of matrices $\alpha\in\R^{n\times n}$ and $\beta\in\R_{\geq 0}^{n\times n}$ and function ${\sf SF}\in\{{\sf SUM},{\sf MAX}\}$. Throughout the paper, we will also refer to the worst-case $\epsilon$-approximate pure price of anarchy and to the worst-case $\epsilon$-approximate coarse correlated price of anarchy of subsets of ${\cal C}_{\bm w}({\cal F},\alpha)$ which are naturally defined by restriction.

We conclude this section with an easy, although crucial result, stating that, independently of which is the adopted social function, both the worst-case $\epsilon$-approximate pure price of anarchy and the worst-case $\epsilon$-approximate coarse correlated price of anarchy of a class of generalized weighted congestion games remain the same even if one restricts to only those games in the given class whose social optimum has social value equal to one\footnote{Indeed, such a result implicitly holds for the worst-case $\epsilon$-approximate price of anarchy of any kind of equilibrium.}. To this aim, for any function ${\sf SF}\in\{{\sf SUM},{\sf MAX}\}$ and matrix $\beta\in\R_{\geq 0}^{n\times n}$, let $\overline{{\cal C}}_{\bm w}({\cal F},\alpha)\subset{\cal C}_{\bm w}({\cal F},\alpha)$ be the subset of all the generalized weighted congestion games induced by $\bm w$ and $\alpha$ and defined over $\cal F$ such that the social optimum $\bm o$ satisfies $\beta\mbox{-}{\sf SF}({\bm o})=1$. Similarly, for any congestion model ${\sf T}_{\bm w}$, one defines the class $\overline{{\cal C}}_{{\sf T}_{\bm w}}({\cal F},\alpha)$, so as to obtain $\overline{{\cal C}}_{\bm w}({\cal F},\alpha)=\bigcup_{{\sf T}_{\bm w}}\overline{{\cal C}}_{{\sf T}_{\bm w}}({\cal F},\alpha)$.

\begin{lemma}\label{normalization}
For any real value $\epsilon\geq 0$, $n$-dimensional vector of weights $\bm w$, finite space of functions $\cal F$, pair of matrices $\alpha\in\R^{n\times n}$ and $\beta\in\R_{\geq 0}^{n\times n}$ and function ${\sf SF}\in\{{\sf SUM},{\sf MAX}\}$, it holds that ${\sf PPoA}_\epsilon(\beta\mbox{-}{\sf SF},{\cal C}_{\bm w}({\cal F},\alpha))={\sf PPoA}_\epsilon(\beta\mbox{-}{\sf SF},\overline{{\cal C}}_{\bm w}({\cal F},\alpha))$ and ${\sf CCPoA}_\epsilon(\beta\mbox{-}{\sf SF},{\cal C}_{\bm w}({\cal F},\alpha))={\sf CCPoA}_\epsilon(\beta\mbox{-}{\sf SF},\overline{{\cal C}}_{\bm w}({\cal F},\alpha))$.
\end{lemma}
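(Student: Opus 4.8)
The plan is to show that every generalized weighted congestion game $({\sf CG},\alpha)\in{\cal C}_{\bm w}({\cal F},\alpha)$ can be turned into a game lying in the normalized subclass $\overline{{\cal C}}_{\bm w}({\cal F},\alpha)$ that has exactly the same $\epsilon$-approximate pure Nash equilibria, exactly the same $\epsilon$-approximate coarse correlated equilibria, the same social optimum, and the same $\epsilon$-approximate prices of anarchy, by simply rescaling all the latency functions by a suitable positive constant. Since $\overline{{\cal C}}_{\bm w}({\cal F},\alpha)\subseteq{\cal C}_{\bm w}({\cal F},\alpha)$, this correspondence immediately yields both claimed equalities.

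Concretely, I would fix $({\sf CG},\alpha)$ with ${\sf CG}=({\sf T}_{\bm w},(v_k^e)_{e\in E,k\in[r]})$, let $\bm o$ be its social optimum with respect to $\beta\mbox{-}{\sf SF}$, and recall that $\beta\mbox{-}{\sf SF}({\bm o})>0$. I would then set $\lambda=1/\beta\mbox{-}{\sf SF}({\bm o})>0$ and define ${\sf CG}'=({\sf T}_{\bm w},(\lambda v_k^e)_{e\in E,k\in[r]})$, i.e., the game obtained from $\sf CG$ by replacing each latency function $\ell_e=\sum_{k\in[r]}v_k^e f_k$ with $\ell_e'=\lambda\ell_e=\sum_{k\in[r]}(\lambda v_k^e)f_k$. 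Since $\cal F$ is a linear space and $\lambda>0$, each $\ell_e'$ still lies in $\cal F$ and is still a non-negative latency function (with $\ell_e'(x)=0$ iff $x=0$), so ${\sf CG}'\in{\cal C}_{{\sf T}_{\bm w}}({\cal F})$ and hence $({\sf CG}',\alpha)\in{\cal C}_{\bm w}({\cal F},\alpha)$; note that ${\sf CG}$ and ${\sf CG}'$ share the congestion model ${\sf T}_{\bm w}$, hence the same set of strategy profiles.

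Next I would record the elementary scaling identities: for every ${\bm\sigma}\in{\sf\Sigma}$ and every $i\in\N$ one has $c_i'({\bm\sigma})=\lambda c_i({\bm\sigma})$, hence $\widehat{c}_i'({\bm\sigma})=\lambda\widehat{c}_i({\bm\sigma})$ and $\beta\mbox{-}cost_i'({\bm\sigma})=\lambda\,\beta\mbox{-}cost_i({\bm\sigma})$, and therefore $\beta\mbox{-}{\sf SF}'({\bm p})=\lambda\,\beta\mbox{-}{\sf SF}({\bm p})$ for every ${\bm p}\in\Delta({\sf\Sigma})$ and for both ${\sf SF}\in\{{\sf SUM},{\sf MAX}\}$ (multiplication by the positive constant $\lambda$ commutes with the sum and with the maximum over the players). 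Two consequences follow. First, since multiplying both sides of the defining inequality of an $\epsilon$-approximate pure Nash equilibrium (resp. $\epsilon$-approximate coarse correlated equilibrium) by $\lambda>0$ leaves the inequality unchanged, ${\sf PNE}_\epsilon({\sf CG}',\alpha)={\sf PNE}_\epsilon({\sf CG},\alpha)$ and ${\sf CCE}_\epsilon({\sf CG}',\alpha)={\sf CCE}_\epsilon({\sf CG},\alpha)$. Second, $\bm o$ is still a minimizer of $\beta\mbox{-}{\sf SF}'$ (the set of minimizers is scale-invariant) and $\beta\mbox{-}{\sf SF}'({\bm o})=\lambda\,\beta\mbox{-}{\sf SF}({\bm o})=1$, so $({\sf CG}',\alpha)\in\overline{{\cal C}}_{\bm w}({\cal F},\alpha)$.

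Finally, for every ${\bm p}\in{\sf CCE}_\epsilon({\sf CG},\alpha)={\sf CCE}_\epsilon({\sf CG}',\alpha)$ the ratio $\frac{\beta\mbox{-}{\sf SF}'({\bm p})}{\beta\mbox{-}{\sf SF}'({\bm o})}=\frac{\lambda\,\beta\mbox{-}{\sf SF}({\bm p})}{\lambda\,\beta\mbox{-}{\sf SF}({\bm o})}=\frac{\beta\mbox{-}{\sf SF}({\bm p})}{\beta\mbox{-}{\sf SF}({\bm o})}$ is preserved, whence ${\sf CCPoA}_\epsilon(\beta\mbox{-}{\sf SF},{\sf CG}',\alpha)={\sf CCPoA}_\epsilon(\beta\mbox{-}{\sf SF},{\sf CG},\alpha)$, and identically ${\sf PPoA}_\epsilon(\beta\mbox{-}{\sf SF},{\sf CG}',\alpha)={\sf PPoA}_\epsilon(\beta\mbox{-}{\sf SF},{\sf CG},\alpha)$. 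Taking suprema over $({\sf CG},\alpha)\in{\cal C}_{\bm w}({\cal F},\alpha)$ gives ${\sf CCPoA}_\epsilon(\beta\mbox{-}{\sf SF},{\cal C}_{\bm w}({\cal F},\alpha))\leq{\sf CCPoA}_\epsilon(\beta\mbox{-}{\sf SF},\overline{{\cal C}}_{\bm w}({\cal F},\alpha))$, while the reverse inequality is immediate since $\overline{{\cal C}}_{\bm w}({\cal F},\alpha)\subseteq{\cal C}_{\bm w}({\cal F},\alpha)$; hence equality, and the same argument applies verbatim to ${\sf PPoA}_\epsilon$. I do not expect any real obstacle here: the only two points requiring a line of justification are that the rescaling keeps the latency functions non-negative and inside $\cal F$ (immediate, as $\cal F$ is a linear space and $\lambda>0$), and that the $\epsilon$-approximate equilibrium conditions and the ratio defining the price of anarchy are invariant under multiplying all costs by a common positive constant.
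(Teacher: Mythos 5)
Your proposal is correct and follows essentially the same route as the paper's own proof: rescaling the coefficients $v_k^e$ by $1/\beta\mbox{-}{\sf SF}({\bm o})$ multiplies all perceived costs and social values by the same positive constant, so the equilibrium sets and the price-of-anarchy ratios are unchanged while the optimum is normalized to $1$. The only cosmetic difference is that you invoke ``$\cal F$ is a linear space,'' whereas the paper only needs that games over $\cal F$ are specified by arbitrary real coefficients on the basis ${\cal B}({\cal F})$, so the rescaled game trivially remains in ${\cal C}_{{\sf T}_{\bm w}}({\cal F},\alpha)$.
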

\begin{proof}
Fix a congestion model ${\sf T}_{\bm w}$, a pair of matrices $\alpha\in\R^{n\times n}$ and $\beta\in\R_{\geq 0}^{n\times n}$ and a function ${\sf SF}\in\{{\sf SUM},{\sf MAX}\}$. The claim directly follows from the fact that, for any game ${\cal G}:=\left({\sf T}_{\bm w},(v_k^e)_{e\in E,k\in [r]},\alpha\right)\in{\cal C}_{{\sf T}_{\bm w}}({\cal F},\alpha)$ such that $\beta\mbox{-}{\sf SF}({\bm o}):=x>0$, there always exists a game $\overline{{\cal G}}:=\left({\sf T}_{\bm w},(\overline{v}_k^e)_{e\in E,k\in [r]},\alpha\right)\in\overline{{\cal C}}_{{\sf T}_{\bm w}}({\cal F},\alpha)$, obtained by setting $\overline{v}^e_k=v^e_k/x$, such that, for any ${\bm\sigma}\in{\sf\Sigma}({\sf T}_{\bm w})$, it holds that
$$\sum_{e\in E}\sum_{k\in [r]}v^e_k f_k(n_e({\bm\sigma}))\sum_{i\in\N}\sum_{j\in\N:e\in\sigma_j}\beta_{ij}w_j=x \sum_{e\in E}\sum_{k\in [r]}\overline{v}^e_k f_k(n_e({\bm\sigma}))\sum_{i\in\N}\sum_{j\in\N:e\in\sigma_j}\beta_{ij}w_j$$

and that

$$\max_{i\in\N}\sum_{j\in\N}\beta_{ij}w_j\sum_{e\in\sigma_j}\sum_{k\in [r]}v^e_k f_k(n_e({\bm\sigma}))=x\cdot\max_{i\in\N}\sum_{j\in\N}\beta_{ij}w_j\sum_{e\in\sigma_j}\sum_{k\in [r]}\overline{v}^e_k f_k(n_e({\bm\sigma})).$$

Moreover, for any ${\bm\sigma}\in{\sf\Sigma}({\sf T}_{\bm w})$ and $i\in\N$, it holds that

$$\sum_{e\in E}\sum_{k\in [r]}v^e_k f_k(n_e({\bm\sigma}))\sum_{j\in\N:e\in\sigma_j}\alpha_{ij}w_j=x \sum_{e\in E}\sum_{k\in [r]}\overline{v}^e_k f_k(n_e({\bm\sigma}))\sum_{j\in\N:e\in\sigma_j}\alpha_{ij}w_j.$$

That is, for any strategy profile ${\bm\sigma}\in{\sf\Sigma}({\sf T}_{\bm w})$, the social value of $\bm\sigma$ in game $\cal G$ is equal to $x$ times the social value of $\bm\sigma$ in game $\overline{\cal G}$, independently of which is the adopted social function. Moreover, for any strategy profile ${\bm\sigma}\in{\sf\Sigma}({\sf T}_{\bm w})$ and any $i\in\N$, the perceived cost of player $i$ in $\bm\sigma$ in game $\cal G$ is equal to $x$ times the perceived cost of player $i$ in $\bm\sigma$ in game $\overline{\cal G}$. This implies that $\cal G$ and $\overline{\cal G}$ have the same set of equilibria (whatever the concept of equilibrium is defined) and that the ratio between any linear combination of the social values of any set of strategy profiles is the same in both games.\qed
\end{proof}

\section{The Main Result}\label{sec-main}

Our main result is the proof of the following general theorem.

\begin{theorem}\label{mainth}
For any real value $\epsilon\geq 0$, $n$-dimensional vector of weights $\bm w$, finite space of functions $\cal F$, pair of matrices $\alpha\in\R^{n\times n}$ and $\beta\in\R_{\geq 0}^{n\times n}$ and function ${\sf SF}\in\{{\sf SUM},{\sf MAX}\}$, it holds that ${\sf PPoA}_\epsilon(\beta\mbox{-}{\sf SF},{\cal C}_{\bm w}({\cal F},\alpha))={\sf CCPoA}_\epsilon(\beta\mbox{-}{\sf SF},{\cal C}_{\bm w}({\cal F},\alpha))$. Moreover, the value ${\sf PPoA}_\epsilon(\beta\mbox{-}{\sf SF},{\cal C}_{\bm w}({\cal F},\alpha))$ can always be determined via the primal-dual method.
\end{theorem}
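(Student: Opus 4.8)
The plan is to establish the two inequalities ${\sf PPoA}_\epsilon \leq {\sf CCPoA}_\epsilon$ and ${\sf CCPoA}_\epsilon \leq {\sf PPoA}_\epsilon$ separately, noting that the first is already observed in the paper (since ${\sf PNE}_\epsilon \subseteq {\sf CCE}_\epsilon$). So the real work is the reverse inequality, and I would prove it constructively via the primal-dual method, which simultaneously yields the ``moreover'' claim about the primal-dual method determining the exact value. By Lemma~\ref{normalization} I may restrict attention to the normalized class $\overline{\cal C}_{\bm w}({\cal F},\alpha)$ in which $\beta\mbox{-}{\sf SF}({\bm o})=1$; this removes the denominators and lets me work with a single congestion model at a time, taking a supremum over congestion models at the end.

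First I would set up the primal-dual pair. Fix a congestion model ${\sf T}_{\bm w}$. The \emph{primal} linear program has as variables the coefficients $(v_k^e)_{e\in E, k\in[r]}$ describing the latency functions (for fixed basis $\cal B(\cal F)$), and it maximizes the social cost $\beta\mbox{-}{\sf SF}$ of a \emph{designated} strategy profile ${\bm\sigma}$ (thought of as the candidate worst equilibrium) subject to: (i) the normalization constraint $\beta\mbox{-}{\sf SF}({\bm o})=1$ for a designated profile ${\bm o}$ (the optimum); and (ii) for ${\sf SUM}$, the equilibrium inequalities. The crucial point — and the reason the equivalence holds — is \emph{which} equilibrium inequalities to write. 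For the pure case one would use $\widehat c_i({\bm\sigma}) \leq (1+\epsilon)\widehat c_i({\bm\sigma}_{-i}, o_i)$ with the single deviation $x = o_i$, using Equation~\eqref{deviation}. For the coarse correlated case the defining inequalities are $\sum_{\bm\sigma} p_{\bm\sigma}\widehat c_i({\bm\sigma}) \leq (1+\epsilon)\sum_{\bm\sigma} p_{\bm\sigma}\widehat c_i({\bm\sigma}_{-i},x)$. The key structural observation to exploit is that the expression in \eqref{deviation} is \emph{linear} in the latency coefficients $v_k^e$ and that, after expanding $n_e(\cdot)$ and the $f_k$'s, the whole primal is a linear program in the $v_k^e$ whose feasible region and objective, for the CCE relaxation, is a ``convex combination / averaging'' of the pure-profile constraints. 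I would then write the \emph{dual} LP, whose variables are multipliers $\gamma$ (one per player, for the equilibrium constraints, or one per player--profile pair in the CCE case) and $\mu$ (for normalization), and whose objective is exactly an upper bound on ${\sf PoA}$. The smoothness-type/primal-dual argument then shows: any dual-feasible solution gives an upper bound on ${\sf CCPoA}_\epsilon$; and a primal optimal solution gives a matching lower-bound instance realized by a pure Nash equilibrium.

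The engine of the equivalence is the standard ``the worst coarse correlated equilibrium can be taken to be supported on pure equilibria, after collapsing resources'' move adapted to this setting: given any $({\sf CG},\alpha)$ and any ${\bm p}\in{\sf CCE}_\epsilon$ with $\beta\mbox{-}{\sf SF}({\bm p})/\beta\mbox{-}{\sf SF}({\bm o})$ close to the sup, one builds a new game — on a congestion model with the same weight vector but a product/blow-up of the resource set indexed by $\Sigma$ — together with a pure profile whose cost equals $\beta\mbox{-}{\sf SF}({\bm p})$ and which is an $\epsilon$-approximate pure Nash equilibrium, because each player's deviation inequality in the new game is precisely the ${\bm p}$-averaged deviation inequality \eqref{deviation}. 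Here the linearity of \eqref{deviation} in the $v_k^e$ is what makes the averaged constraints again expressible as honest latency coefficients on the blown-up resources. For ${\sf MAX}$ one handles the $\max$ by treating each player's contribution separately: ${\sf CCPoA}_\epsilon$ for $\beta\mbox{-}{\sf MAX}$ equals $\max_i$ of a per-player quantity, and each such quantity is again governed by a linear program of the same type, so the same collapse applies. I would organize this as: (1) define primal/dual for ${\sf SUM}$; (2) prove weak-duality $\Rightarrow$ dual feasibility bounds ${\sf CCPoA}_\epsilon$; (3) prove a primal optimal solution is realized by a pure NE, giving ${\sf PPoA}_\epsilon \geq$ primal optimum $=$ dual optimum $\geq {\sf CCPoA}_\epsilon$; (4) repeat for ${\sf MAX}$ with the extra $\max_i$ bookkeeping; (5) combine with the trivial direction.

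The main obstacle I anticipate is step (3), converting a primal-optimal latency assignment together with the CCE-averaged constraints into a genuine \emph{pure} $\epsilon$-approximate Nash equilibrium instance, i.e.\ making the ``resource blow-up'' construction rigorous while (a) preserving the weight vector $\bm w$ exactly (the statement fixes $\bm w$, so one cannot introduce new weights — only replicate resources and reuse strategies), (b) ensuring the matrix $\alpha$ is respected in the new game, since the deviation term \eqref{deviation} mixes in $\sum_{j: e\in\sigma_j}\alpha_{ij}w_j$ and these ``$\alpha$-weighted congestions'' must come out right on the blown-up resources, and (c) checking that the chosen basis functions $f_k$ and the minimal assumptions on $\cal F$ (non-negativity, $f(x)=0 \iff x=0$) suffice for the argument — in particular that no constraint silently forces a latency coefficient out of the allowed cone or requires evaluating $f_k$ at congestions not attainable with weights from $\bm w$. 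A secondary subtlety is the ${\sf MAX}$ case, where one must verify that replacing $\mathbb{E}[\max_i \beta\mbox{-}cost_i]$ by $\max_i \mathbb{E}[\beta\mbox{-}cost_i]$ (the definition actually adopted, per the footnote) is exactly what makes the per-player linearization — and hence the primal-dual formulation — go through; for each fixed maximizing index $i^\star$ the problem becomes a single linear program and the pure-collapse argument is unchanged, but one must argue the worst case over $i^\star$ is attained. I expect (c) to be routine given the minimal assumptions, (b) to require careful but mechanical tracking of the $\alpha$-weighted sums, and (a)--(3) to be the genuinely delicate part.
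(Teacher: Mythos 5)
Your plan is correct in its core, but the engine you use for the nontrivial inequality ${\sf CCPoA}_\epsilon\leq{\sf PPoA}_\epsilon$ is genuinely different from the paper's. The paper never turns a coarse correlated equilibrium into a pure equilibrium of another game: it fixes once and for all a \emph{representative} congestion model ${\sf T}^*_{\bm w}$ with two strategies per player and $4^n$ resources $e(P,Q)$, one for every pair $P,Q\subseteq\N$ of ``users at equilibrium / users at optimum'', so that (Property~\ref{prop1}) the dual constraints of the pure-equilibrium LP on ${\sf T}^*_{\bm w}$ contain as a subset every dual constraint arising from any congestion model with weights $\bm w$ and any pair of profiles; the Extension Lemma (Lemma~\ref{lemma3}) then notes that each dual constraint of the CCE formulation is a $p_{\bm\sigma}$-convex combination of such pure constraints, so the optimal dual value $\gamma^*=\overline{x}$ of the representative PNE program bounds every ${\sf CCPoA}_\epsilon$ by weak duality, while Lemma~\ref{lemma2} realizes $\overline{x}$ as a pure price of anarchy. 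Your route instead takes a near-worst CCE $\bm p$ and blows up the instance: resources $E\times\mathrm{supp}({\bm p})$ with latencies $p_{\bm\tau}\ell_e$, and two strategies per player (the union over the support of the copies of $\tau_i$, and of $o_i$). This does work: the congestion of the copy $(e,{\bm\tau})$ is exactly $n_e({\bm\tau})$, $n_e({\bm\tau}_{-i},o_i)$ or $n_e({\bm o})$ under the relevant profiles, so the $\alpha$- and $\beta$-weighted sums you worry about in (b) come out right automatically, the weight vector $\bm w$ and the matrix $\alpha$ are untouched, the coefficients $p_{\bm\tau}v^e_k$ keep the latencies non-negative and in the span of ${\cal B}({\cal F})$, and the $\bm p$-averaged deviation inequality with $x=o_i$ becomes literally the pure deviation inequality of the new game; the ${\sf MAX}$ case goes through precisely because the paper adopts $\M=\max_i\mathbb{E}[\cdot]$, so the blown-up pure profile and the blown-up optimum reproduce $\M({\bm p})$ and $\M({\bm o})$. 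So your obstacles (a)--(c) are all resolvable, and your construction yields ${\sf CCPoA}_\epsilon\leq{\sf PPoA}_\epsilon$ with no duality at all --- arguably more elementary than the paper's Extension Lemma. (Both your construction and the paper's Lemma~\ref{lemma2} share the same unstated corner case: with general $\alpha$ and $\epsilon>0$, the ``stay-put'' deviation $x=\sigma_i$ demands $\epsilon\,\widehat{c}_i\geq 0$, which neither argument checks when perceived costs may be negative.)

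Two caveats. First, your intermediate claim in step (2) that dual feasibility for a per-model PNE program bounds ${\sf CCPoA}_\epsilon$ is not justified as stated: a dual solution built for one congestion model and one designated pair of profiles need not satisfy the CCE dual constraints of a different model or support --- this is exactly what the universality of ${\sf T}^*_{\bm w}$ buys the paper; in your plan this step is in fact redundant, since the blow-up already delivers the inequality. Second, the ``moreover'' part of the theorem is only half-addressed: you would obtain ${\sf PPoA}_\epsilon$ as a supremum of LP values over all congestion models and profile pairs, whereas the paper's point is that a single LP, namely ${\sf PP_{PNE}}({\sf SF},{\sf T}^*_{\bm w},{\bm\sigma}^*,{\bm o}^*)$ on the $4^n$-resource representative model, already equals the class-wide value (Lemmas~\ref{lemma1} and~\ref{lemma2}). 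You could recover this from your own construction by observing that in the blown-up game any two resource copies with the same pair of user sets under the two designated profiles can be merged, which collapses every instance onto exactly the representative model --- at which point your approach and the paper's essentially meet.
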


\begin{proof}
Fix a real value $\epsilon\geq 0$, an $n$-dimensional vector of weights $\bm w$, a finite space of functions $\cal F$, a pair of matrices $\alpha\in\R^{n\times n}$ and $\beta\in\R_{\geq 0}^{n\times n}$ and a function ${\sf SF}\in\{{\sf SUM},{\sf MAX}\}$. We prove the claim in four steps.

\

{\noindent\bf Step 1)} Definition of the {\em representative} congestion model ${\sf T}^*_{\bm w}$.

\

Let ${\sf T}_{\bm w}^*=(\N,{\bm w},E^*,({\sf\Sigma}_i^*)_{i\in\N})$ be a congestion model such that
\begin{enumerate}
\item ${\sf\Sigma}^*_i=\{\sigma_i^*,o_i^*\}$ for each $i\in\N$, i.e., each player $i\in\N$ has exactly two strategies denoted as $\sigma^*_i$ and $o^*_i$;
\item the set of resources $E^*$ and the strategies $\sigma^*_i$ and $o^*_i$ for each $i\in\N$ are properly defined in such a way that, for each $P,Q\subseteq\N$, there exists exactly one resource $e(P,Q)\in E^*$ for which it holds that $\{i\in\N\ |\ e(P,Q)\in\sigma^*_i\}=P$ and $\{i\in\N\ |\ e(P,Q)\in o^*_i\}=Q$. Hence, $|E^*|=2^n\cdot 2^n=4^n$.
\end{enumerate}
Intuitively, the representative congestion model ${\sf T}_{\bm w}^*$ is defined in such a way that the pair of strategy profiles ${\bm\sigma}^*=(\sigma^*_1,\ldots,\sigma^*_n)$ and ${\bm o}^*=(o^*_1,\ldots,o^*_n)$ is able to encompass all possible configurations of congestions that may arise in any pair of strategy profiles and for any congestion model induced by $\bm w$. In particular, the following fundamental property holds.
\begin{property}\label{prop1}
For any congestion model ${\sf T}_{\bm w}=(\N,{\bm w},E,({\sf\Sigma}_i)_{i\in\N})$, resource $e\in E$ and pair of profiles ${\bm\sigma}',{\bm\sigma}''\in{\sf\Sigma}({\sf T}_{\bm w})$, there always exists a resource $\overline{e}\in E^*$ such that $\{i\in [n]\ |\ e\in\sigma'_i\}=\{i\in [n]\ |\ \overline{e}\in\sigma^*_i\}$ and $\{i\in [n]\ |\ e\in\sigma''_i\}=\{i\in [n]\ |\ \overline{e}\in o^*_i\}$.
\end{property}
\begin{proof}
Fix a congestion model ${\sf T}_{\bm w}=(\N,{\bm w},E,({\sf\Sigma}_i)_{i\in\N})$, a resource $e\in E$ and pair of profiles ${\bm\sigma}',{\bm\sigma}''\in{\sf\Sigma}({\sf T}_{\bm w})$. Let $\{i\in [n]\ |\ e\in\sigma'_i\}:=P$ and $\{i\in [n]\ |\ e\in\sigma''_i\}:=Q$. To prove the claim, it suffices choosing $\overline{e}=e(P,Q)$.\qed
\end{proof}

\

{\noindent\bf Step 2)} Definition of a primal-dual formulation for ${\sf PPoA}_\epsilon(\beta\mbox{-}{\sf SF},\overline{{\cal C}}_{\bm w}({\cal F},\alpha))$.

\

Fix a function ${\sf SF}\in\{{\sf SUM},{\sf MAX}\}$. Our aim is to use the optimal solution of a linear program ${\sf PP_{PNE}}({\sf SF},{\sf T}_{\bm w}^*,{\bm \sigma}^*,{\bm o}^*)$ to achieve an upper bound on the worst-case $\epsilon$-approximate pure price of anarchy of any game in $\overline{{\cal C}}_{{\sf T}_{\bm w}^*}({\cal F},\alpha)$ under the restriction that the latency functions are suitably tuned so as to make ${\bm \sigma}^*$ the worst $\epsilon$-approximate pure Nash equilibrium and ${\bm o}^*$ a social optimum (of social value $1$).
The linear program ${\sf PP_{PNE}}({\sf SUM},{\sf T}_{\bm w}^*,{\bm \sigma}^*,{\bm o}^*)$ for the $\beta$-utilitarian social function is defined as follows.

\begin{displaymath}
\begin{array}{ll}
maximize \displaystyle\sum_{e\in E^*}\sum_{k\in [r]}v_k^e f_k(n_e({\bm\sigma}^*))\sum_{i\in\N}\sum_{j\in\N:e\in\sigma^*_j}\beta_{ij}w_j\\\vspace{0.1cm}
subject\ to\\\vspace{0.1cm}
\displaystyle\sum_{e\in\sigma^*_i\setminus o^*_i}\sum_{k\in [r]}v^e_k f_k(n_e({\bm\sigma}^*))\sum_{j\in\N:e\in\sigma^*_j}\alpha_{ij}w_j\\
\ \ \ \ -(1+\epsilon)\displaystyle\sum_{e\in o^*_i\setminus\sigma^*_i\in E}\sum_{k\in [r]}v^e_k f_k(n_e({\bm\sigma}^*)+w_i)\left(\alpha_{ii}w_i+\sum_{j\in\N:e\in\sigma^*_j}\alpha_{ij}w_j\right)\leq 0, & \ \ \forall i\in\N\\\vspace{0.1cm}
\displaystyle\sum_{e\in E^*}\sum_{k\in [r]}v_k^e f_k(n_e({\bm o}^*))\sum_{i\in\N}\sum_{j\in\N:e\in o^*_j}\beta_{ij}w_j \leq 1,\\\vspace{0.1cm}
v^e_k\geq 0, & \ \ \forall e\in E^*,k\in [r]
\end{array}
\end{displaymath}
The first $n$ constraints guarantee that no player can lower her perceived cost of a factor more than $1+\epsilon$ by switching to the strategy she uses in the social optimum ${\bm o}^*$ (see Equation (\ref{deviation})), while the last constraint normalizes to at most $1$ the value $\S({\bm o}^*)$.

The dual program ${\sf DP_{PNE}}({\sf SUM},{\sf T}_{\bm w}^*,{\bm \sigma}^*,{\bm o}^*)$ is the following (we associate a variable $y_i$ with the $i$th constraint of the first $n$ ones and a variable $\gamma$ with the normalizing constraint).
\begin{displaymath}
\begin{array}{ll}
minimize\ \gamma\\\vspace{0.1cm}
subject\ to\\\vspace{0.1cm}
\displaystyle\sum_{i\in\N:e\in\sigma^*_i\setminus o^*_i}y_i f_k(n_e({\bm \sigma}^*))\sum_{j\in\N:e\in\sigma^*_j}\alpha_{ij}w_j\\
\ \ -(1+\epsilon)\displaystyle\sum_{i\in\N:e\in o^*_i\setminus\sigma^*_i}y_i f_k(n_e({\bm \sigma}^*)+w_i)\left(\alpha_{ii}w_i+\sum_{j\in\N:e\in\sigma^*_j}\alpha_{ij}w_j\right)\\
\ \ +\displaystyle\gamma f_k(n_e({\bm o}^*))\sum_{i\in\N}\sum_{j\in\N:e\in o^*_j}\beta_{ij}w_j\geq f_k(n_e({\bm\sigma}^*))\sum_{i\in\N}\sum_{j\in\N:e\in\sigma^*_j}\beta_{ij}w_j, & \ \ \forall e\in E^*,k\in [r]\\\vspace{0.1cm}
y_i\geq 0, & \ \ \forall i\in\N\\\vspace{0.1cm}
\gamma\geq 0
\end{array}
\end{displaymath}

Similarly, the linear program ${\sf PP_{PNE}}({\sf MAX},{\sf T}_{\bm w}^*,{\bm \sigma}^*,{\bm o}^*)$ for the $\beta$-egalitarian social function is defined as follows.

\begin{displaymath}
\begin{array}{ll}
maximize\ t\\\vspace{0.1cm}
subject\ to\\\vspace{0.1cm}
\displaystyle\sum_{e\in\sigma^*_i\setminus o^*_i}\sum_{k\in [r]}v^e_k f_k(n_e({\bm\sigma}^*))\sum_{j\in\N:e\in\sigma^*_j}\alpha_{ij}w_j\\
\ \ \ \ -(1+\epsilon)\displaystyle\sum_{e\in o^*_i\setminus\sigma^*_i\in E}\sum_{k\in [r]}v^e_k f_k(n_e({\bm\sigma}^*)+w_i)\left(\alpha_{ii}w_i+\sum_{j\in\N:e\in\sigma^*_j}\alpha_{ij}w_j\right)\leq 0, & \ \ \forall i\in\N\\\vspace{0.1cm}
\displaystyle\sum_{e\in E^*}\sum_{k\in [r]}v_k^e f_k(n_e({\bm\sigma}^*))\sum_{j\in\N:e\in\sigma^*_j}\beta_{1j}w_j= t, \\\vspace{0.1cm}
\displaystyle\sum_{e\in E^*}\sum_{k\in [r]}v_k^e f_k(n_e({\bm\sigma}^*))\sum_{j\in\N:e\in\sigma^*_j}\beta_{ij}w_j\leq t, & \ \ \forall i\in\N\setminus\{1\}\\\vspace{0.1cm}
\displaystyle\sum_{e\in E^*}\sum_{k\in [r]}v_k^e f_k(n_e({\bm o}^*))\sum_{j\in\N:e\in o^*_j}\beta_{ij}w_j\leq 1, & \ \ \forall i\in\N\\\vspace{0.1cm}
v^e_k\geq 0, & \ \ \forall e\in E^*,k\in [r]\\
t\geq 0
\end{array}
\end{displaymath}
Here, again the first $n$ constraints guarantee that no player can lower her perceived cost of a factor more than $1+\epsilon$ by switching to the strategy she uses in the social optimum ${\bm o}^*$. The next $n$ constraints impose that the maximum value in the social function $\M({\bm\sigma}^*)$ is attained by player $1$ (this hypothesis is without loss of generality up to a renumbering of the players) and has value $t$ (which is the objective function to be maximized), while the last $n$ constraints normalizes to at most $1$ the value $\M({\bm o}^*)$.

The dual program ${\sf DP_{PNE}}({\sf MAX},{\sf T}_{\bm w}^*,{\bm \sigma}^*,{\bm o}^*)$ is the following (we associate variables $y_i$, $z_i$ and $\gamma_i$ with the $i$th constraint of the first, the middle and the last family of $n$ constraints, respectively).
\begin{displaymath}
\begin{array}{ll}
minimize\displaystyle\sum_{i\in\N}\gamma_i\\\vspace{0.1cm}
subject\ to\\\vspace{0.1cm}
\displaystyle\sum_{i\in\N:e\in\sigma^*_i\setminus o^*_i}y_i f_k(n_e({\bm \sigma}^*))\sum_{j\in\N:e\in\sigma^*_j}\alpha_{ij}w_j\\
\ \ -(1+\epsilon)\displaystyle\sum_{i\in\N:e\in o^*_i\setminus\sigma^*_i}y_i f_k(n_e({\bm \sigma}^*)+w_i)\left(\alpha_{ii}w_i+\sum_{j\in\N:e\in\sigma^*_j}\alpha_{ij}w_j\right)\\
\ \ +\displaystyle f_k(n_e({\bm\sigma}^*))\sum_{i\in\N}z_i\sum_{j\in\N:e\in \sigma^*_j}\beta_{ij}w_j\\
\ \ +\displaystyle f_k(n_e({\bm o}^*))\sum_{i\in\N}\gamma_i\sum_{j\in\N:e\in o^*_j}\beta_{ij}w_j\geq 0, & \ \ \forall e\in E^*,k\in [r]\\\vspace{0.1cm}
\displaystyle\sum_{i\in N}z_i\leq -1\\\vspace{0.1cm}
y_i,z_i,\gamma_i\geq 0, & \ \ \forall i\in\N
\end{array}
\end{displaymath}

We stress that, being all the values $\epsilon$, $(w_i)_{i\in\N}$, $(\alpha_{ij},\beta_{ij})_{i,j\in\N}$, $n_e({\bm\sigma}^*)$ and $n_e({\bm o}^*)$ fixed constants in the proposed formulations, ${\sf PP_{PNE}}({\sf SUM},{\sf T}_{\bm w}^*,{\bm \sigma}^*,{\bm o}^*)$ is a linear program defined over the variables $(v^e_k)_{e\in E^*,k\in [r]}$ and ${\sf PP_{PNE}}({\sf MAX},{\sf T}_{\bm w}^*,{\bm \sigma}^*,{\bm o}^*)$ is a linear program defined over the variables $(v^e_k)_{e\in E^*,k\in [r]}$ and $t$, as needed. Note that, for ${\sf SF}\in\{{\sf SUM},{\sf MAX}\}$, ${\sf PP_{PNE}}({\sf SF},{\sf T}_{\bm w}^*,{\bm \sigma}^*,{\bm o}^*)$ is, in general, under-constrained. In fact, in order to assure that ${\bm\sigma}^*$ and ${\bm o}^*$ are the worst $\epsilon$-approximate pure Nash equilibrium and the social optimum, respectively, one should guarantee $\beta\mbox{-}{\sf SF}({\bm\sigma}^*)\geq\beta\mbox{-}{\sf SF}({\bm\sigma})$ for each other $\epsilon$-approximate pure Nash equilibrium ${\bm\sigma}\in{\sf\Sigma}^*$, if any, and $\beta\mbox{-}{\sf SF}({\bm o}^*)\leq\beta\mbox{-}{\sf SF}({\bm\sigma})$ for each ${\bm\sigma}\in{\sf\Sigma}^*$. Moreover, the normalizing constraints have also been relaxed so as to assure $\beta\mbox{-}{\sf SF}({\bm o}^*)\leq 1$ rather than $\beta\mbox{-}{\sf SF}({\bm o}^*) = 1$. Anyway, as we will discuss in the proof of Lemma \ref{lemma1}, either removing or relaxing these constraints can only worsen the resulting upper bounds.

The significance of the previously defined pairs of primal-dual formulations is witnessed by the following lemma which states that the value of an optimal solution to ${\sf PP_{PNE}}({\sf SF},{\sf T}_{\bm w}^*,{\bm \sigma}^*,{\bm o}^*)$ provides an upper bound on ${\sf PPoA}_\epsilon(\beta\mbox{-}{\sf SF},\overline{{\cal C}}_{\bm w}({\cal F},\alpha))$.

\begin{lemma}\label{lemma1}
For a fixed ${\sf SF}\in\{{\sf SUM},{\sf MAX}\}$, let $\overline{x}$ be the value of an optimal solution to ${\sf PP_{PNE}}({\sf SF},{\sf T}_{\bm w}^*,{\bm \sigma}^*,{\bm o}^*)$ when this linear problem is not unlimited, otherwise let $\overline{x}=\infty$. Then ${\sf PPoA}_\epsilon(\beta\mbox{-}{\sf SF},\overline{{\cal C}}_{\bm w}({\cal F},\alpha))\leq \overline{x}$.
\end{lemma}
\begin{proof}
We first show that ${\sf PP_{PNE}}({\sf SUM},{\sf T}_{\bm w}^*,{\bm \sigma}^*,{\bm o}^*)$ and ${\sf PP_{PNE}}({\sf MAX},{\sf T}_{\bm w}^*,{\bm \sigma}^*,{\bm o}^*)$ are both feasible. In fact, fixed an index $k^*\in [r]$, the following solution
\begin{displaymath}
v_k^e=\left\{
\begin{array}{cl}
\left(n f_k(w_j)w_j\sum_{i\in\N}\beta_{ij}\right)^{-1} & \textrm{if }k=k^*\textrm{ and }e\in\left\{e(\{j\},\emptyset),e(\emptyset,\{j\})\right\}\textrm{ for some }j\in\N,\\
0 & \textrm{otherwise}
\end{array}
\right.
\end{displaymath}
is feasible for ${\sf PP_{PNE}}({\sf SUM},{\sf T}_{\bm w}^*,{\bm \sigma}^*,{\bm o}^*)$ and yields an objective value equal to $1$. Similarly, assuming, for instance, that the players are numbered in such a way that $\sum_{i\in\N}\beta_{1i}\geq\sum_{i\in\N}\beta_{ji}$ for each $j\in\N\setminus\{1\}$, the solution with $t=1$ and
\begin{displaymath}
v_k^e=\left\{
\begin{array}{cl}
\left(f_k(w_j)w_j\sum_{i\in\N}\beta_{1i}\right)^{-1} & \textrm{if }k=k^*\textrm{ and }e\in\left\{e(\{j\},\emptyset),e(\emptyset,\{j\})\right\}\textrm{ for some }j\in\N,\\
0 & \textrm{otherwise}
\end{array}
\right.
\end{displaymath}
is feasible for ${\sf PP_{PNE}}({\sf MAX},{\sf T}_{\bm w}^*,{\bm \sigma}^*,{\bm o}^*)$ and yields an objective value equal to $1$.
Hence, for any ${\sf SF}\in\{{\sf SUM},{\sf MAX}\}$, exactly one of the two cases included in the claim may occur.

If ${\sf PP_{PNE}}({\sf SF},{\sf T}_{\bm w}^*,{\bm \sigma}^*,{\bm o}^*)$ is unlimited, then $\overline{x}=\infty$ and the claim is trivially true.

So, we can assume that ${\sf PP_{PNE}}({\sf SF},{\sf T}_{\bm w}^*,{\bm \sigma}^*,{\bm o}^*)$ admits an optimal solution of value $\overline{x}$. As we have already observed, ${\sf PP_{PNE}}({\sf SF},{\sf T}_{\bm w}^*,{\bm \sigma}^*,{\bm o}^*)$ may be under-constrained. Nevertheless, recall that we are only interested in an upper bound on the worst-case $\epsilon$-approximate pure price of anarchy of the class $\overline{{\cal C}}_{{\sf T}_{\bm w}^*}({\cal F},\alpha)$ attainable when the latency functions are suitably tuned so as to make ${\bm \sigma}^*$ the worst $\epsilon$-approximate pure Nash equilibrium and ${\bm o}^*$ a social optimum (of social value $1$). Let us denote with $\widehat{{\cal C}}_{{\sf T}_{\bm w}^*}({\cal F},\alpha)$ such a subclass of $\overline{{\cal C}}_{{\sf T}_{\bm w}^*}({\cal F},\alpha)$. Hence, since once fixed the profiles ${\bm\sigma}^*$ and ${\bm o}^*$ any game in $\widehat{{\cal C}}_{{\sf T}_{\bm w}^*}({\cal F},\alpha)$ can be specified by a particular choice of the values $v_k^e$, and because the removal or the relaxation of some constraints in a maximization problem can only increase the value of the optimal solution, we obtain that the optimal solution to ${\sf PP_{PNE}}({\sf SF},{\sf T}_{\bm w}^*,{\bm \sigma}^*,{\bm o}^*)$ yields an upper bound on the worst-case $\epsilon$-approximate pure price of anarchy of the class $\widehat{{\cal C}}_{{\sf T}_{\bm w}^*}({\cal F},\alpha)$. That is, ${\sf PPoA}_\epsilon(\beta\mbox{-}{\sf SF},\widehat{{\cal C}}_{{\sf T}^*_{\bm w}}({\cal F},\alpha))\leq \overline{x}$. Moreover, since the optimal solution to ${\sf PP_{PNE}}({\sf SF},{\sf T}_{\bm w}^*,{\bm \sigma}^*,{\bm o}^*)$ has value $\overline{x}$, then, by the Strong Duality Theorem, each optimal solution $({\bm y}^*,\gamma^*)$ to ${\sf DP_{PNE}}({\sf SF},{\sf T}_{\bm w}^*,{\bm \sigma}^*,{\bm o}^*)$ satisfies $\overline{x}=\gamma^*$. By Property \ref{prop1}, the particular combinatorial structure of the pair ${\bm\sigma}^*$ and ${\bm o}^*$ implies that, for any alternative pair of strategy profiles ${\bm\sigma}$ and ${\bm o}$ in ${\sf T}_{\bm w}^*$, the set of constraints of the dual program ${\sf DP_{PNE}}({\sf SF},{\sf T}_{\bm w}^*,{\bm \sigma},{\bm o})$ is a subset of that of ${\sf DP_{PNE}}({\sf SF},{\sf T}_{\bm w}^*,{\bm \sigma}^*,{\bm o}^*)$. This implies that any optimal solution $({\bm y},\gamma)$ to ${\sf DP_{PNE}}({\sf SF},{\sf T}_{\bm w}^*,{\bm \sigma},{\bm o})$ must obey $\gamma\leq\gamma^*$. Thus, one can claim that $\gamma^*=\overline{x}$ is indeed an upper bound on the worst-case $\epsilon$-approximate pure price of anarchy of the class $\overline{{\cal C}}_{{\sf T}_{\bm w}^*}({\cal F},\alpha)$, that is, ${\sf PPoA}_\epsilon(\beta\mbox{-}{\sf SF},\overline{{\cal C}}_{{\sf T}^*_{\bm w}}({\cal F},\alpha))\leq \overline{x}$. Note also that, again by Property \ref{prop1}, for any other congestion model ${\sf T}_{\bm w}=(\N,{\bm w},E,({\sf\Sigma}_i)_{i\in\N})$, the pair of primal-dual formulations ${\sf PP_{PNE}}({\sf SF},{\sf T}_{\bm w},{\bm \sigma},{\bm o})$ and ${\sf DP_{PNE}}({\sf SF},{\sf T}_{\bm w},{\bm \sigma},{\bm o})$ induced by any pair of strategy profiles ${\bm\sigma},{\bm o}\in{\sf \Sigma}({\sf T}_{\bm w})$ are such that the set of constraints of ${\sf DP_{PNE}}({\sf SF},{\sf T}_{\bm w},{\bm \sigma},{\bm o})$ is again a subset of that of ${\sf DP_{PNE}}({\sf SF},{\sf T}_{\bm w}^*,{\bm \sigma}^*,{\bm o}^*)$ and this implies that $\gamma^*=\overline{x}$ is even an upper bound on the worst-case $\epsilon$-approximate pure price of anarchy of the whole class $\overline{{\cal C}}_{\bm w}({\cal F},\alpha)=\bigcup_{{\sf T}_{\bm w}}\overline{{\cal C}}_{{\sf T}_{\bm w}}({\cal F},\alpha)$, that is, ${\sf PPoA}_\epsilon(\beta\mbox{-}{\sf SF},\overline{{\cal C}}_{\bm w}({\cal F},\alpha))\leq \overline{x}$.\qed
\end{proof}

\

{\noindent\bf Step 3)} Proof of existence of a game $({\sf CG},\alpha)\in\overline{{\cal C}}_{\bm w}({\cal F},\alpha)$ such that ${\sf PPoA}_\epsilon(\beta\mbox{-}{\sf SF},{\sf CG},\alpha)=\overline{x}$.

\begin{lemma}\label{lemma2}
For a fixed ${\sf SF}\in\{{\sf SUM},{\sf MAX}\}$, let $\overline{x}$ be the value of an optimal solution to ${\sf PP_{PNE}}({\sf SF},{\sf T}_{\bm w}^*,{\bm \sigma}^*,{\bm o}^*)$ when this linear problem is not unlimited, otherwise let $\overline{x}=\infty$. Then ${\sf PPoA}_\epsilon(\beta\mbox{-}{\sf SF},\overline{{\cal C}}_{\bm w}({\cal F},\alpha))=\overline{x}$.
\end{lemma}
\begin{proof}
Assume that ${\sf PP_{PNE}}({\sf SF},{\sf T}_{\bm w}^*,{\bm \sigma}^*,{\bm o}^*)$ admits a feasible solution $\widehat{{\sf SOL}}_{\sf SUM}=(\widehat{v}_k^e)_{e\in E^*,k\in [r]}$ or $\widehat{{\sf SOL}}_{\sf MAX}=\left((\widehat{v}_k^e)_{e\in E^*,k\in [r]},\widehat{t}\right)$, both of value $\widehat{x}$, depending on which is the value of $\sf SF$. Consider the game $({\sf CG},\alpha)$, where ${\sf CG}=({\sf T}^*_{\bm w},(\widehat{v}_k^e)_{e\in E^*,k\in [r]})$ is defined by the representative congestion model ${\sf T}^*_{\bm w}$ coupled with the values $(\widehat{v}_k^e)_{e\in E^*,k\in [r]}$. Since, for any ${\sf SF}\in\{{\sf SUM},{\sf MAX}\}$, $\widehat{{\sf SOL}}_{\sf SF}$ is feasible for ${\sf PP_{PNE}}({\sf SF},{\sf T}_{\bm w}^*,{\bm \sigma}^*,{\bm o}^*)$, it follows that ${\bm\sigma}^*$ is an $\epsilon$-approximate pure Nash equilibrium for $({\sf CG},\alpha)$ such that $\beta\mbox{-}{\sf SF}({\bm\sigma}^*)=\widehat{x}$. This implies ${\sf PPoA}_\epsilon(\beta\mbox{-}{\sf SF},\overline{{\cal C}}_{\bm w}({\cal F},\alpha))\geq\widehat{x}$ (recall, in fact, that $\beta\mbox{-}{\sf SF}({\bm o}^*)\leq 1$).

In the case in which ${\sf PP_{PNE}}({\sf SF},{\sf T}_{\bm w}^*,{\bm \sigma}^*,{\bm o}^*)$ admits an optimal solution $\overline{\sf SOL}_{\sf SF}$ of value $\overline{x}$, by the above argument, it follows that ${\sf PPoA}_\epsilon(\beta\mbox{-}{\sf SF},\overline{{\cal C}}_{\bm w}({\cal F},\alpha))\geq\overline{x}$, which, together with Lemma \ref{lemma1}, implies the claim. In the case in which ${\sf PP_{PNE}}({\sf SF},{\sf T}_{\bm w}^*,{\bm \sigma}^*,{\bm o}^*)$ is unlimited, then, for any $x\in\R$, there exists a feasible solution ${\sf SOL}_{\sf SF}$ to ${\sf PP_{PNE}}({\sf SF},{\sf T}_{\bm w}^*,{\bm \sigma}^*,{\bm o}^*)$ of value at least $x$, which implies that, for any $x\in\R$, it holds that ${\sf PPoA}_\epsilon(\beta\mbox{-}{\sf SF},\overline{{\cal C}}_{\bm w}({\cal F},\alpha))\geq x$.\qed
\end{proof}

\

{\noindent\bf Step 4)} Definition of a primal-dual formulation for ${\sf CCPoA}_\epsilon(\beta\mbox{-}{\sf SF},\overline{{\cal C}}_{\bm w}({\cal F},\alpha))$ and proof of the ``Extension Lemma''.

\

Fix a congestion model ${\sf T}_{\bm w}=(\N,{\bm w},E,({\sf\Sigma}_i)_{i\in\N})$, a probability distribution ${\bm p}\in\Delta({\sf\Sigma}({\sf T}_{\bm w}))$ and a strategy profile ${\bm o}\in{\sf\Sigma}({\sf T}_{\bm w})$. We define the following primal program ${\sf PP_{CCE}}({\sf SUM},{\sf T}_{\bm w},{\bm p},{\bm o})$ for the $\beta$-utilitarian social function.

\begin{displaymath}
\begin{array}{ll}
maximize \displaystyle\sum_{{\bm\sigma}\in{\sf\Sigma}}p_{\bm\sigma}\sum_{e\in E}\sum_{k\in [r]}v_k^e f_k(n_e({\bm\sigma}))\sum_{i\in\N}\sum_{j\in\N:e\in\sigma_j}\beta_{ij}w_j\\\vspace{0.1cm}
subject\ to\\\vspace{0.1cm}
\displaystyle\sum_{{\bm\sigma}\in{\sf\Sigma}}p_{\bm\sigma}\sum_{e\in\sigma_i\setminus o_i}\sum_{k\in [r]}v^e_k f_k(n_e({\bm\sigma}))\sum_{j\in\N:e\in\sigma_j}\alpha_{ij}w_j\\
\ \ \ \ -(1+\epsilon)\displaystyle\sum_{{\bm\sigma}\in{\sf\Sigma}}p_{\bm\sigma}\sum_{e\in o_i\setminus\sigma_i}\sum_{k\in [r]}v^e_k f_k(n_e({\bm\sigma})+w_i)\left(\alpha_{ii}w_i+\sum_{j\in\N:e\in\sigma_j}\alpha_{ij}w_j\right)\leq 0, & \ \ \forall i\in\N\\\vspace{0.1cm}
\displaystyle\sum_{e\in E}\sum_{k\in [r]}v_k^e f_k(n_e({\bm o}))\sum_{i\in\N}\sum_{j\in\N:e\in o_j}\beta_{ij}w_j \leq 1,\\\vspace{0.1cm}
v^e_k\geq 0, & \ \ \forall e\in E,k\in [r]
\end{array}
\end{displaymath}

The dual program ${\sf DP_{CCE}}({\sf SUM},{\sf T}_{\bm w},{\bm p},{\bm o})$ is the following (again, we associate a variable $y_i$ with the $i$th constraint of the first $n$ ones and a variable $\gamma$ with the normalizing constraint).
\begin{displaymath}
\begin{array}{ll}
minimize\ \gamma\\\vspace{0.1cm}
subject\ to\\\vspace{0.1cm}
\displaystyle\sum_{{\bm\sigma}\in{\sf\Sigma}}p_{\bm\sigma}\sum_{i\in\N:e\in\sigma_i\setminus o_i}y_i f_k(n_e({\bm \sigma}))\sum_{j\in\N:e\in\sigma_j}\alpha_{ij}w_j\\
\ \ \ \ -(1+\epsilon)\displaystyle\sum_{{\bm\sigma}\in{\sf\Sigma}}p_{\bm\sigma}\sum_{i\in\N:e\in o_i\setminus\sigma_i}y_i f_k(n_e({\bm \sigma})+w_i)\left(\alpha_{ii}w_i+\sum_{j\in\N:e\in\sigma_j}\alpha_{ij}w_j\right)\\
\ \ +\displaystyle\gamma f_k(n_e({\bm o}))\sum_{i\in\N}\sum_{j\in\N:e\in o_j}\beta_{ij}w_j\geq \sum_{{\bm\sigma}\in{\sf\Sigma}}p_{\bm\sigma}f_k(n_e({\bm\sigma}))\sum_{i\in\N}\sum_{j\in\N:e\in\sigma_j}\beta_{ij}w_j, & \ \ \forall e\in E,k\in [r]\\\vspace{0.1cm}
y_i\geq 0, & \ \ \forall i\in\N\\
\gamma\geq 0
\end{array}
\end{displaymath}

Similarly, for the $\beta$-egalitarian social function, the primal program ${\sf PP_{CCE}}({\sf MAX},{\sf T}_{\bm w},{\bm p},{\bm o})$ is defined as follows.

\begin{displaymath}
\begin{array}{ll}
maximize\ t\\\vspace{0.1cm}
subject\ to\\\vspace{0.1cm}
\displaystyle\sum_{{\bm\sigma}\in{\sf\Sigma}}p_{\bm\sigma}\sum_{e\in\sigma_i\setminus o_i}\sum_{k\in [r]}v^e_k f_k(n_e({\bm\sigma}))\sum_{j\in\N:e\in\sigma_j}\alpha_{ij}w_j\\
\ \ \ \ -(1+\epsilon)\displaystyle\sum_{{\bm\sigma}\in{\sf\Sigma}}p_{\bm\sigma}\sum_{e\in o_i\setminus\sigma_i}\sum_{k\in [r]}v^e_k f_k(n_e({\bm\sigma})+w_i)\left(\alpha_{ii}w_i+\sum_{j\in\N:e\in\sigma_j}\alpha_{ij}w_j\right)\leq 0, & \ \ \forall i\in\N\\\vspace{0.1cm}
\displaystyle\sum_{{\bm\sigma}\in{\sf\Sigma}}p_{\bm\sigma}\sum_{e\in E}\sum_{k\in [r]}v_k^e f_k(n_e({\bm\sigma}))\sum_{j\in\N:e\in\sigma_j}\beta_{1j}w_j= t, \\\vspace{0.1cm}
\displaystyle\sum_{{\bm\sigma}\in{\sf\Sigma}}p_{\bm\sigma}\sum_{e\in E}\sum_{k\in [r]}v_k^e f_k(n_e({\bm\sigma}))\sum_{j\in\N:e\in\sigma_j}\beta_{ij}w_j\leq t, & \ \ \forall i\in\N\setminus\{1\}\\\vspace{0.1cm}
\displaystyle\sum_{e\in E}\sum_{k\in [r]}v_k^e f_k(n_e({\bm o}))\sum_{j\in\N:e\in o_j}\beta_{ij}w_j\leq 1, & \ \ \forall i\in\N\\\vspace{0.1cm}
v^e_k\geq 0, & \ \ \forall e\in E,k\in [r]\\
t\geq 0
\end{array}
\end{displaymath}

The dual program ${\sf DP_{CCE}}({\sf MAX},{\sf T}_{\bm w},{\bm p},{\bm o})$ is the following (again, we associate variables $y_i$, $z_i$ and $\gamma_i$ with the $i$th constraint of the first, the middle and the last family of $n$ constraints, respectively).

\begin{displaymath}
\begin{array}{ll}
minimize\displaystyle\sum_{i\in\N}\gamma_i\\\vspace{0.1cm}
subject\ to\\\vspace{0.1cm}
\displaystyle\sum_{{\bm\sigma}\in{\sf\Sigma}}p_{\bm\sigma}\sum_{i\in\N:e\in\sigma_i\setminus o_i}y_i f_k(n_e({\bm \sigma}))\sum_{j\in\N:e\in\sigma_j}\alpha_{ij}w_j\\
\ \ \displaystyle-(1+\epsilon)\sum_{{\bm\sigma}\in{\sf\Sigma}}p_{\bm\sigma}\displaystyle\sum_{i\in\N:e\in o_i\setminus\sigma_i}y_i f_k(n_e({\bm \sigma})+w_i)\left(\alpha_{ii}w_i+\sum_{j\in\N:e\in\sigma_j}\alpha_{ij}w_j\right)\\
\ \ +\displaystyle\sum_{{\bm\sigma}\in{\sf\Sigma}}p_{\bm\sigma} f_k(n_e({\bm\sigma}))\sum_{i\in\N} z_i\sum_{j\in\N:e\in \sigma_j}\beta_{ij}w_j\\
\ \ +\displaystyle f_k(n_e({\bm o}))\sum_{i\in\N}\gamma_i\sum_{j\in\N:e\in o_j}\beta_{ij}w_j\geq 0, & \ \ \forall e\in E,k\in [r]\\\vspace{0.1cm}
\displaystyle\sum_{i\in N}z_i\leq -1\\\vspace{0.1cm}
y_i,z_i,\gamma_i\geq 0, & \ \ \forall i\in\N
\end{array}
\end{displaymath}

Again, even though both ${\sf PP_{CCE}}({\sf SUM},{\sf T}_{\bm w},{\bm p},{\bm o})$ and ${\sf PP_{CCE}}({\sf MAX},{\sf T}_{\bm w},{\bm p},{\bm o})$ may be, in general, under-constrained, by the same arguments used in the discussion of the pairs of primal-dual formulations used for bounding the worst-case $\epsilon$-approximate pure price of anarchy, it follows that, for each function ${\sf SF}\in\{{\sf SUM},{\sf MAX}\}$, the optimal solution to ${\sf PP_{CCE}}({\sf SF},{\sf T}_{\bm w},{\bm p},{\bm o})$ yields an upper bound on the worst-case $\epsilon$-approximate coarse correlated price of anarchy of the class $\overline{{\cal C}}_{{\sf T}_{\bm w}}({\cal F},\alpha)$ attainable when $\bm p$ is taken for the worst $\epsilon$-approximate coarse correlated equilibrium and $\bm o$ for the social optimum (of social value $1$). Let us denote such a class with $\widehat{{\cal C}}_{{\sf T}_{\bm w}}({\cal F},\alpha)$.

The following lemma shows that any upper bound on ${\sf PPoA}_\epsilon(\beta\mbox{-}{\sf SF},{\cal C}_{{\sf T}^*_{\bm w}}({\cal F},\alpha))$ proved via the primal-dual method automatically extends to ${\sf CCPoA}_\epsilon(\beta\mbox{-}{\sf SF},\widehat{{\cal C}}_{{\sf T}_{\bm w}}({\cal F},\alpha))$.

\begin{lemma}[Extension Lemma]\label{lemma3}
For any function ${\sf SF}\in\{{\sf SUM},{\sf MAX}\}$, congestion model ${\sf T}_{\bm w}=(\N,{\bm w},E,({\sf\Sigma}_i)_{i\in\N})$, probability distribution ${\bm p}\in\Delta({\sf\Sigma}({\sf T}_{\bm w}))$ and strategy profile ${\bm o}\in{\sf\Sigma}({\sf T}_{\bm w})$, it holds that any feasible solution to ${\sf DP_{PNE}}({\sf SF},{\sf T}^*_{\bm w},{\bm\sigma},{\bm o})$ is also a feasible solution to ${\sf DP_{CCE}}({\sf SF},{\sf T}_{\bm w},{\bm p},{\bm o})$.
\end{lemma}
\begin{proof}
Let $({\bm y}^*,\gamma^*)$ be a feasible solution to ${\sf DP_{PNE}}({\sf SUM},{\sf T}^*_{\bm w},{\bm\sigma},{\bm o})$. By Property \ref{prop1} of the representative congestion model ${\sf T}^*_{\bm w}$, it follows that, for any pair of strategy profiles ${\bm\sigma},{\bm o}\in{\sf\Sigma}({\sf T}_{\bm w})$, it holds that
\begin{multline}\label{basicineq}
\sum_{i\in\N:e\in\sigma_i\setminus o_i}y^*_i f_k(n_e({\bm \sigma}))\sum_{j\in\N:e\in\sigma_j}\alpha_{ij}w_j\\
-(1+\epsilon)\sum_{i\in\N:e\in o_i\setminus\sigma_i}y^*_i f_k(n_e({\bm \sigma})+w_i)\left(\alpha_{ii}w_i+\sum_{j\in\N:e\in\sigma_j}\alpha_{ij}w_j\right)\\+\gamma^* f_k(n_e({\bm o}))\sum_{i\in\N}\sum_{j\in\N:e\in o_j}\beta_{ij}w_j\geq f_k(n_e({\bm\sigma}))\sum_{i\in\N}\sum_{j\in\N:e\in\sigma_j}\beta_{ij}w_j
\end{multline}
for any $e\in E$ and $k\in [r]$.

Since $p_{\bm\sigma}\geq 0$ for each ${\bm\sigma}\in{\sf\Sigma}({\sf T}_{\bm w})$, by multiplying inequality (\ref{basicineq}) for $p_{\bm\sigma}$ and then summing up the obtained inequalities for each ${\bm\sigma}\in{\sf\Sigma}({\sf T}_{\bm w})$, we obtain that, for each $e\in E$ and $k\in [r]$, it holds that
\begin{multline}\label{basicineq1}
\sum_{{\bm\sigma}\in{\sf\Sigma}}p_{\bm\sigma}\sum_{i\in\N:e\in\sigma_i\setminus o_i}y^*_i f_k(n_e({\bm \sigma}))\sum_{j\in\N:e\in\sigma_j}\alpha_{ij}w_j\\
-(1+\epsilon)\sum_{{\bm\sigma}\in{\sf\Sigma}}p_{\bm\sigma}\sum_{i\in\N:e\in o_i\setminus\sigma_i}y^*_i f_k(n_e({\bm \sigma})+w_i)\left(\alpha_{ii}w_i+\sum_{j\in\N:e\in\sigma_j}\alpha_{ij}w_j\right)\\+\gamma^* f_k(n_e({\bm o}))\sum_{i\in\N}\sum_{j\in\N:e\in o_j}\beta_{ij}w_j\sum_{{\bm\sigma}\in{\sf\Sigma}}p_{\bm\sigma}\geq\sum_{{\bm\sigma}\in{\sf\Sigma}}p_{\bm\sigma} f_k(n_e({\bm\sigma}))\sum_{i\in\N}\sum_{j\in\N:e\in\sigma_j}\beta_{ij}w_j.
\end{multline}
By $\sum_{{\bm\sigma}\in{\sf\Sigma}}p_{\bm\sigma}=1$, it follows that, for any $e\in E$ and $k\in [r]$, inequality $(\ref{basicineq1})$ coincides with the relative dual constraint of ${\sf DP_{CCE}}({\sf SUM},{\sf T}_{\bm w},{\bm p},{\bm o})$ and this shows that the solution $({\bm y}^*,\gamma^*)$ is also feasible for ${\sf DP_{CCE}}({\sf SUM},{\sf T}_{\bm w},{\bm p},{\bm o})$.

A similar argument shows the claim for the case of the social function $\M$.\qed
\end{proof}

\

We now have all the ingredients needed to conclude the proof of the theorem.

Fix a function ${\sf SF}\in\{{\sf SUM},{\sf MAX}\}$. Assume, first, that ${\sf PP_{PNE}}({\sf SF},{\sf T}_{\bm w}^*,{\bm \sigma}^*,{\bm o}^*)$ is unlimited. Then, by Lemma \ref{lemma2}, it holds that ${\sf PPoA}_\epsilon(\beta\mbox{-}{\sf SF},\overline{{\cal C}}_{\bm w}({\cal F},\alpha))=\infty$ which, together with ${\sf PPoA}_\epsilon(\beta\mbox{-}{\sf SF},\overline{{\cal C}}_{\bm w}({\cal F},\alpha))\leq{\sf CCPoA}_\epsilon(\beta\mbox{-}{\sf SF},\overline{{\cal C}}_{\bm w}({\cal F},\alpha))$, immediately implies that ${\sf PPoA}_\epsilon(\beta\mbox{-}{\sf SF},\overline{{\cal C}}_{\bm w}({\cal F},\alpha))={\sf CCPoA}_\epsilon(\beta\mbox{-}{\sf SF},\overline{{\cal C}}_{\bm w}({\cal F},\alpha))$. By applying Lemma \ref{normalization}, we obtain ${\sf PPoA}_\epsilon(\beta\mbox{-}{\sf SF},{\cal C}_{\bm w}({\cal F},\alpha))={\sf CCPoA}_\epsilon(\beta\mbox{-}{\sf SF},{\cal C}_{\bm w}({\cal F},\alpha))$.

In the case in which ${\sf PP_{PNE}}({\sf SF},{\sf T}_{\bm w}^*,{\bm \sigma}^*,{\bm o}^*)$ admits an optimal solution of value $\overline{x}$, by Lemma \ref{lemma2}, it holds that ${\sf PPoA}_\epsilon(\beta\mbox{-}{\sf SF},\overline{{\cal C}}_{\bm w}({\cal F},\alpha))=\overline{x}$. Moreover, by the Strong Duality Theorem, there exists a feasible solution $({\bm y}^*,\gamma^*)$ to ${\sf DP_{PNE}}({\sf SF},{\sf T}^*_{\bm w},{\bm\sigma},{\bm o})$ of value $\gamma^*=\overline{x}$. Choose an arbitrary game $({\sf CG},\alpha)\in\overline{{\cal C}}_{\bm w}({\cal F},\alpha)$ such that ${\sf CCPoA}_\epsilon(\beta\mbox{-}{\sf SF},\overline{{\cal C}}_{\bm w}({\cal F},\alpha))={\sf CCPoA}_\epsilon(\beta\mbox{-}{\sf SF},{\sf CG},\alpha)$ and let ${\sf T}_{\bm w}$ be the congestion model defining $\sf CG$, $\bm p$ be the worst $\epsilon$-approximate coarse correlated equilibrium of $({\sf CG},\alpha)$ and $\bm o$ be the social optimum (of social value $1$). By the definition of ${\sf T}_{\bm w}$, $\bm p$ and $\bm o$, it follows that the optimal solution to ${\sf PP_{CCE}}({\sf SF},{\sf T}_{\bm w},{\bm p},{\bm o})$ has a value of at least ${\sf CCPoA}_\epsilon(\beta\mbox{-}{\sf SF},\overline{{\cal C}}_{\bm w}({\cal F},\alpha))$, which, by the Weak Duality Theorem, implies in turn that any feasible solution to ${\sf DP_{CCE}}({\sf SF},{\sf T}_{\bm w},{\bm p},{\bm o})$ has a value of at least ${\sf CCPoA}_\epsilon(\beta\mbox{-}{\sf SF},\overline{{\cal C}}_{\bm w}({\cal F},\alpha))$. By Lemma \ref{lemma3}, it follows that $({\bm y}^*,\gamma^*)$ is also a feasible solution to ${\sf DP_{CCE}}({\sf SF},{\sf T}_{\bm w},{\bm p},{\bm o})$. This implies that ${\sf CCPoA}_\epsilon(\beta\mbox{-}{\sf SF},\overline{{\cal C}}_{\bm w}({\cal F},\alpha))\leq\gamma^*=\overline{x}={\sf PPoA}_\epsilon(\beta\mbox{-}{\sf SF},\overline{{\cal C}}_{\bm w}({\cal F},\alpha))$. Again, by applying Lemma \ref{normalization}, we obtain that ${\sf PPoA}_\epsilon(\beta\mbox{-}{\sf SF},{\cal C}_{\bm w}({\cal F},\alpha))={\sf CCPoA}_\epsilon(\beta\mbox{-}{\sf SF},{\cal C}_{\bm w}({\cal F},\alpha))$.

It is clear from our discussion that the value ${\sf PPoA}_\epsilon(\beta\mbox{-}{\sf SF},\overline{{\cal C}}_{\bm w}({\cal F},\alpha))={\sf PPoA}_\epsilon(\beta\mbox{-}{\sf SF},{\cal C}_{\bm w}({\cal F},\alpha))$ can always be (theoretically) determined via the primal-dual method, that is, by computing the value of the optimal solution of either the primal program ${\sf PP_{PNE}}({\sf SF},{\sf T}^*_{\bm w},{\bm\sigma},{\bm o})$ or the dual one ${\sf DP_{PNE}}({\sf SF},{\sf T}^*_{\bm w},{\bm\sigma},{\bm o})$ for each function ${\sf SF}\in\{{\sf SUM},{\sf MAX}\}$, and this concludes the proof (solving the dual program, in particular, requires to determine the minimum value $\gamma^*$ for which all the $r\cdot 4^n$ possible constraints induced by the $|E^*|=4^n$ pairs of values yielded by the representative congestion model ${\sf T}^*_{\bm w}$ on each of the $r$ components of the latency functions are satisfied).\qed
\end{proof}

\section{Conclusions and Open Problems}

By introducing the notions of smooth games and robust price of anarchy, Roughgarden \cite{R09,R12} showed that the class of congestion games with non-negative and non-decreasing latency functions is tight under the utilitarian social function (see Section \ref{sec-pd} of the Appendix for formal definitions). This result has been extended to the class of weighted congestion games by Bhawalkar, Gairing and Roughgarden \cite{BGR10}.
By exploiting the primal-dual method we introduced in \cite{B12}, we have generalized this result along four directions. In fact, our tightness result holds for the class of generalized weighted congestion games, for generalizations of both the utilitarian and the egalitarian social functions, for any non-negative (and possibly decreasing) latency functions and for the approximated version of the price of anarchy.

The fact that two different and seemingly uncorrelated approaches may produce the same type of general results is quite interesting. Understanding whether there is some kind of relationships between them is an intriguing question. Both approaches set up some machinery (smoothness argument vs. primal-dual formulation) allowing for the proof of significant upper bounds on the pure price of anarchy of the games under analysis and then make use of an extension theorem to show that such bounds extend to the coarse correlated price of anarchy as well.

In particular, how is this last step achieved?

Note that the proof of the smoothness argument (see the proof of Lemma \ref{smootharg}) requires the definition of smoothness to hold only for any pair of strategy profiles $({\bm\sigma},{\bm\sigma}')$ such that $\bm\sigma$ is a pure Nash equilibrium and ${\bm\sigma}'$ is a social optimum. The reason why the definition of smoothness is extended to encompass all possible pairs of strategy profiles is due to the fact that it is indeed the proof of the extension theorem (see the proof of Theorem \ref{extth}) that asks for such a stronger hypothesis. Finally, being a coarse correlated equilibrium a particular probability distribution defined over the set of strategy profiles, the notion of smoothness characterizing each profile in the support of any such an equilibrium can be suitably exploited by the linearity of expectation.

In the primal-dual method, instead, the higher degree of generality that is needed to move from pure Nash equilibria up to coarse correlated equilibria is provided by the representative congestion model which imposes that the variables yielding a feasible solution to the dual formulation for the pure price of anarchy have to satisfy any type of ``pure dual constraint'' that may eventually arise by considering all possible types of configurations of congestions. Then, since it turns out that the dual constraint characterizing the dual formulation for the coarse correlated price of anarchy is indeed a convex combination of a subset of all the possible ``pure dual constraints'', the extension theorem follows immediately.

Anyway, there is an major difference between the two methods when one aims at showing the tightness of a particular class of games.
When adopting the smoothness framework, after having proved that a class of games is $(\lambda,\mu)$-smooth for a certain pair of parameters $\lambda$ and $\mu$, one has to show that there exists a game in the class for which the pure price of anarchy is indeed $\frac{\lambda}{1-\mu}$, that is, that the price of anarchy of pure Nash equilibria matches the robust price of anarchy. We stress that this step can be avoided when adopting the primal-dual formulation, since it is directly implied by the Duality Theory (see Lemma \ref{lemma2}). In fact, note that the notion of robust price of anarchy, as the best possible upper bound on the pure price of anarchy achievable via the smoothness argument, has no correspondent in the primal-dual method where this bound is implicitly defined by the optimal solution of the pair of primal-dual programs.

By summarizing, our findings seem to reveal that the primal-dual method may be superior to the smoothness framework within the realm of weighted congestion games and their possible generalizations, but, at the same time, the primal-dual method has never been exploited so far outside this realm. Hence, a good starting point would be that of trying to export it to other scenarios of investigation in which the smoothness framework has already been fruitfully applied, such as, for instance, the quantification of the price of anarchy in unrelated scheduling games, valid utility games, opinion formation games and auction theory.

\section*{Appendix}

\section{The Smoothness Argument and the Robust Price of Anarchy}\label{sec-pd}

Let ${\cal G}=\left(\N,({\sf\Sigma}_i)_{i\in\N},(c_i)_{i\in\N}\right)$ be a cost minimization game defined by the set of players $\N$, the set of strategies ${\sf\Sigma}_i$ and the individual cost function $c_i:{\sf\Sigma}\rightarrow\RPP$ for each player $i\in\N$. A social function ${\sf SF}:{\sf\Sigma}\rightarrow\RPP$ for $\cal G$ is sum-bounded if, for each ${\bm\sigma}\in\sf\Sigma$, it holds that ${\sf SF}({\bm\sigma})\leq\sum_{i\in\N}c_i({\bm\sigma})$.

\begin{definition}[Smoothness]
Given a social function $\sf SF$, $\cal G$ is $(\lambda,\mu)$-smooth under $\sf SF$ if, for any two strategy profiles ${\bm\sigma},{\bm\sigma}'\in\sf\Sigma$,
it holds that $\sum_{i\in\N}c_i({\bm\sigma}_{-i},\sigma'_i)\leq\lambda{\sf SF}({\bm\sigma}')+\mu{\sf SF}({\bm\sigma})$.
\end{definition}

The connection between the notion of smoothness and that of pure price of anarchy is captured by the following lemma.

\begin{lemma}[Smoothness Argument]\label{smootharg}
If $\cal G$ is $(\lambda,\mu)$-smooth under a sum-bounded social function $\sf SF$, with $\lambda>0$ and $\mu<1$, then it holds that ${\sf PPoA}({\sf SF},{\cal G})\leq\frac{\lambda}{1-\mu}$.
\end{lemma}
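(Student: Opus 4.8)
The plan is to run the classical three-line smoothness argument. First I would fix an arbitrary pure Nash equilibrium ${\bm\sigma}$ of $\cal G$ and let ${\bm\sigma}'$ be a social optimum, i.e.\ a strategy profile minimizing $\sf SF$; since each $c_i$ and $\sf SF$ take values in $\RPP$, we have ${\sf SF}({\bm\sigma}')>0$, so the ratio ${\sf SF}({\bm\sigma})/{\sf SF}({\bm\sigma}')$ is well defined. The goal is to show ${\sf SF}({\bm\sigma})\leq\frac{\lambda}{1-\mu}{\sf SF}({\bm\sigma}')$; taking the supremum over all pure Nash equilibria ${\bm\sigma}$ then yields the claimed bound on ${\sf PPoA}({\sf SF},{\cal G})$ by definition.

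The heart of the argument is a chain of three inequalities. Because ${\bm\sigma}$ is a pure Nash equilibrium, for every player $i\in\N$ it holds that $c_i({\bm\sigma})\leq c_i({\bm\sigma}_{-i},\sigma'_i)$; summing over all players gives $\sum_{i\in\N}c_i({\bm\sigma})\leq\sum_{i\in\N}c_i({\bm\sigma}_{-i},\sigma'_i)$. Sum-boundedness of $\sf SF$ provides ${\sf SF}({\bm\sigma})\leq\sum_{i\in\N}c_i({\bm\sigma})$, and $(\lambda,\mu)$-smoothness applied to the pair $({\bm\sigma},{\bm\sigma}')$ gives $\sum_{i\in\N}c_i({\bm\sigma}_{-i},\sigma'_i)\leq\lambda{\sf SF}({\bm\sigma}')+\mu{\sf SF}({\bm\sigma})$. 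Concatenating these three facts yields ${\sf SF}({\bm\sigma})\leq\lambda{\sf SF}({\bm\sigma}')+\mu{\sf SF}({\bm\sigma})$. Finally I would rearrange: since $\mu<1$, subtracting $\mu{\sf SF}({\bm\sigma})$ and dividing by $1-\mu>0$ gives ${\sf SF}({\bm\sigma})\leq\frac{\lambda}{1-\mu}{\sf SF}({\bm\sigma}')$, hence ${\sf SF}({\bm\sigma})/{\sf SF}({\bm\sigma}')\leq\frac{\lambda}{1-\mu}$, and since ${\bm\sigma}$ was arbitrary the claim follows.

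Honestly there is no genuine obstacle here — the proof is entirely mechanical once the definitions are unwound. The only points requiring a moment's care are: observing that the Nash deviation inequality, summed over players, is exactly what bridges $\sum_i c_i({\bm\sigma})$ and $\sum_i c_i({\bm\sigma}_{-i},\sigma'_i)$; noting that smoothness is invoked only for the single pair (equilibrium, optimum), which is why the weaker ``restricted'' form of smoothness would already suffice for this lemma (though not for the extension theorem); and checking that the hypotheses $\lambda>0$ and $\mu<1$ are precisely what make the final division legitimate and the resulting bound finite and positive.
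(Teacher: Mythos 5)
Your proof is correct and is exactly the paper's argument: the chain ${\sf SF}({\bm\sigma})\leq\sum_{i}c_i({\bm\sigma})\leq\sum_{i}c_i({\bm\sigma}_{-i},o_i)\leq\lambda{\sf SF}({\bm o})+\mu{\sf SF}({\bm\sigma})$ followed by rearranging, using $\mu<1$. Your added remark that smoothness is only needed for the single pair (equilibrium, optimum) matches the observation the paper itself makes in its concluding discussion.
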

\begin{proof}
Let $\bm\sigma$ be any pure Nash equilibrium for $\cal G$ and $\bm o$ be a social optimum for $\cal G$ under $\sf SF$. It holds that
$${\sf SF}({\bm\sigma})\leq\sum_{i\in\N}c_i({\bm\sigma})\leq\sum_{i\in\N}c_i({\bm\sigma}_{-i},o_i)\leq\lambda{\sf SF}({\bm o})+\mu{\sf SF}({\bm\sigma})$$
and the claim follows by rearranging the terms.\qed
\end{proof}

\

The robust price of anarchy is then defined as the best possible upper bound on the pure price of anarchy that can be proved via the smoothness argument. For a game $\cal G$ and a social function $\sf SF$, we denote with ${\cal A}_{\sf SF}({\cal G})$ the set of parameters $(\lambda,\mu)$ such that $\cal G$ is $(\lambda,\mu)$-smooth under $\sf SF$.

\begin{definition}[Robust Price of Anarchy]
Given a sum-bounded social function $\sf SF$, the robust price of anarchy of $\cal G$ under $\sf SF$, is the value $\rho_{{\sf SF}}({\cal G})=\textrm{inf}\left\{\frac{\lambda}{1-\mu}:(\lambda,\mu)\in{\cal A}_{\sf SF}({\cal G})\right\}$.
\end{definition}

The power of the smoothness argument is then stressed by the following extension theorem.

\begin{theorem}[Extension Theorem]\label{extth}
For each cost minimization game $\cal G$ and sum-bounded social function $\sf SF$ for $\cal G$, it holds that ${\sf CCPoA}({\sf SF},{\cal G})\leq\rho_{\sf SF}({\cal G})$.
\end{theorem}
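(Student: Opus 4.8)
The plan is to lift the smoothness argument from a single pure Nash equilibrium to an arbitrary coarse correlated equilibrium by working in expectation over the equilibrium distribution, using the facts that (i) the smoothness inequality holds for \emph{every} pair of strategy profiles, not just equilibrium/optimum pairs, and (ii) a coarse correlated equilibrium controls the expected cost of each player against any fixed unilateral deviation. Concretely, fix a coarse correlated equilibrium $\bm p$ and a social optimum $\bm o$ under $\sf SF$. First I would use sum-boundedness and then linearity of expectation to write $\mathbb{E}_{{\bm\sigma}\sim{\bm p}}[{\sf SF}({\bm\sigma})]\leq\sum_{i\in\N}\mathbb{E}_{{\bm\sigma}\sim{\bm p}}[c_i({\bm\sigma})]$. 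Next, for each player $i$, I would invoke the defining inequality of a coarse correlated equilibrium with the fixed deviation $x=o_i$, namely $\mathbb{E}_{{\bm\sigma}\sim{\bm p}}[c_i({\bm\sigma})]\leq\mathbb{E}_{{\bm\sigma}\sim{\bm p}}[c_i({\bm\sigma}_{-i},o_i)]$, and sum these over $i\in\N$.

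The third step is the crucial use of smoothness: for \emph{each} strategy profile $\bm\sigma$ in the support of $\bm p$, applying the $(\lambda,\mu)$-smoothness inequality to the pair $({\bm\sigma},{\bm o})$ gives $\sum_{i\in\N}c_i({\bm\sigma}_{-i},o_i)\leq\lambda\,{\sf SF}({\bm o})+\mu\,{\sf SF}({\bm\sigma})$. Taking expectation over ${\bm\sigma}\sim{\bm p}$ and again using linearity of expectation yields $\sum_{i\in\N}\mathbb{E}_{{\bm\sigma}\sim{\bm p}}[c_i({\bm\sigma}_{-i},o_i)]\leq\lambda\,{\sf SF}({\bm o})+\mu\,\mathbb{E}_{{\bm\sigma}\sim{\bm p}}[{\sf SF}({\bm\sigma})]$. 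Chaining the three displayed inequalities produces $\mathbb{E}_{{\bm\sigma}\sim{\bm p}}[{\sf SF}({\bm\sigma})]\leq\lambda\,{\sf SF}({\bm o})+\mu\,\mathbb{E}_{{\bm\sigma}\sim{\bm p}}[{\sf SF}({\bm\sigma})]$, and rearranging (using $\mu<1$) gives $\mathbb{E}_{{\bm\sigma}\sim{\bm p}}[{\sf SF}({\bm\sigma})]\leq\frac{\lambda}{1-\mu}\,{\sf SF}({\bm o})$, i.e.\ ${\sf CCPoA}({\sf SF},{\cal G})\leq\frac{\lambda}{1-\mu}$ for this choice of $(\lambda,\mu)$.

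Since the above holds for every pair $(\lambda,\mu)\in{\cal A}_{\sf SF}({\cal G})$ with $\lambda>0$ and $\mu<1$, taking the infimum over all such pairs delivers ${\sf CCPoA}({\sf SF},{\cal G})\leq\rho_{\sf SF}({\cal G})$, as claimed. (One should note the social value of any coarse correlated equilibrium here means $\mathbb{E}_{{\bm\sigma}\sim{\bm p}}[{\sf SF}({\bm\sigma})]$, consistent with the traditional definition; the normalization ${\sf SF}({\bm o})>0$ is guaranteed by $c_i:{\sf\Sigma}\rightarrow\RPP$.) The only mild subtlety — and the step I would be most careful about — is the handling of deviations in the coarse correlated equilibrium inequality: the deviation $o_i$ is a \emph{fixed} strategy independent of $\bm\sigma$, which is exactly what the definition of an (unapproximate, $\epsilon=0$) coarse correlated equilibrium allows, so applying it with $x=o_i$ is legitimate; everything else is a routine combination of sum-boundedness, linearity of expectation, and the pointwise smoothness bound.
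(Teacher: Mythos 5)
Your proposal is correct and follows essentially the same route as the paper's own proof: sum-boundedness, the coarse correlated equilibrium condition applied to the fixed deviations $o_i$, the pointwise $(\lambda,\mu)$-smoothness inequality for each profile in the support, and linearity of expectation, followed by rearranging with $\mu<1$. The only cosmetic difference is that you make explicit the final infimum over $(\lambda,\mu)\in{\cal A}_{\sf SF}({\cal G})$, which the paper leaves implicit.
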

\begin{proof}
Let $\bm p$ be any coarse correlated equilibrium for $\cal G$ and $\bm o$ be a social optimum for $\cal G$ under $\sf SF$. It holds that
$$\sum_{{\bm\sigma}\in{\sf\Sigma}}p_{\bm\sigma}\cdot{\sf SF}({\bm\sigma})\leq\sum_{{\bm\sigma}\in{\sf\Sigma}}p_{\bm\sigma}\sum_{i\in\N}c_i({\bm\sigma})
=\sum_{i\in\N}\sum_{{\bm\sigma}\in{\sf\Sigma}}p_{\bm\sigma}\cdot c_i({\bm\sigma})
\leq\sum_{i\in\N}\sum_{{\bm\sigma}\in{\sf\Sigma}}p_{\bm\sigma}\cdot c_i({\bm\sigma}_{-i},o_i)$$
$$=\sum_{{\bm\sigma}\in{\sf\Sigma}}p_{\bm\sigma}\sum_{i\in\N}c_i({\bm\sigma}_{-i},o_i)\leq\sum_{{\bm\sigma}\in{\sf\Sigma}}p_{\bm\sigma}\left(\lambda{\sf SF}({\bm o})+\mu{\sf SF}({\bm\sigma})\right)=\mu\sum_{{\bm\sigma}\in{\sf\Sigma}}p_{\bm\sigma}\cdot{\sf SF}({\bm\sigma})+\lambda{\sf SF}({\bm o})$$
and the claim follows by rearranging the terms.\qed
\end{proof}

\

Let $\cal C$ be a class of cost minimization games and $\widehat{{\cal C}}\subseteq{\cal C}$ be the subclass of the games in $\cal C$ which admit at least one pure Nash equilibrium. Given a social function $\sf SF$, we denote with ${\cal A}_{\sf SF}({\cal C})$ the set of parameters $(\lambda,\mu)$ such that each game ${\cal G}\in\widehat{{\cal C}}$ is $(\lambda,\mu)$-smooth under $\sf SF$.

\begin{definition}[Tight Class of Games]
A class of games $\cal C$ is tight under the social function $\sf SF$ if it holds that $\textrm{sup}_{{\cal G}\in\widehat{{\cal C}}}{\sf PPoA}({\sf SF},{\cal G})=\textrm{inf}_{(\lambda,\mu)\in{\cal A}({\cal C})}\frac{\lambda}{1-\mu}$.
\end{definition}

\end{document}